\newcommand{\gapi}{Graph API\xspace}
\newcommand{\ie}{, i.e,\xspace}
\newcommand{\eg}{, e.g.,\xspace}
\definecolor{mauve}{rgb}{0.58,0,0.82}
\definecolor{whitesmoke}{rgb}{0.96, 0.96, 0.96}
\lstdefinelanguage{myLang}{
	sensitive = true,
	alsoletter=-,
	keywords={type},
	otherkeywords={},
	morestring=[b]",
	tabsize=4}
\newcommand{\lstMyLang}[1]{\lstinline[breaklines=true,language=myLang,basicstyle=\listingsfontinline,mathescape,literate={\-}{}{0\discretionary{-}}}
\lstdefinestyle{mystyle}{
	backgroundcolor=\color{whitesmoke},
	keywordstyle=\color{mauve},
	stringstyle=\ttfamily\tiny,
	basicstyle=\ttfamily\tiny,
	breakatwhitespace=false,
	breaklines=true,
	captionpos=b,
	keepspaces=true,
	numbers=left,
	numbersep=5pt,
	showspaces=false,
	showstringspaces=false,
	showtabs=false,
	tabsize=2
}
\begin{document}

\title[Taint Analysis for Graph APIs Focusing on Broken Access Control]{Taint Analysis for Graph APIs\texorpdfstring{\\}{} Focusing on Broken Access Control}
\author[L.~Lambers]{Leen Lambers\lmcsorcid{0000-0001-6937-5167}}[a]
\author[L.~Sakizloglou]{Lucas Sakizloglou\lmcsorcid{0000-0001-6971-1589}}[a]
\author[T.~Khakharova]{Taisiya Khakharova\lmcsorcid{0009-0006-7720-1160}}[a]
\author[F.~Orejas]{Fernando Orejas\lmcsorcid{0000-0002-3023-4006}}[b]

\address{Brandenburg University of Technology, Cottbus, Germany}
\email{<name>.<surname>@b-tu.de}
\address{Universitat Politècnica de Catalunya, Barcelona, Spain}
\email{orejas@lsi.upc.edu}

\begin{abstract}
A Graph API models the data managed by a web application based on the graph data model, i.e., data objects are represented by nodes and relationships by edges. This rather novel type of APIs presents new challenges when it comes to properly securing the APIs against the usual web application security risks, e.g., broken access control. A prominent security testing approach is taint analysis, which traces tainted, i.e., security-relevant, data from sources (where tainted data is inserted) to sinks (where the use of tainted data may lead to a security risk), over the information flow in an application.

We present the first systematic approach to \emph{static} and \emph{dynamic} taint analysis for Graph APIs focusing on broken access control. The approach comprises the following. We taint nodes of the Graph API if they represent data requiring specific privileges in order to be retrieved or manipulated, and identify API calls which are related to sources and sinks. Then, we statically analyze whether a tainted information flow between API source and sink calls occurs. To this end, we model the API calls using graph transformation rules. We subsequently use Critical Pair Analysis to automatically analyze potential dependencies between rules representing source calls and rules representing sink calls.
We distinguish direct from indirect tainted information flow and argue under which conditions the Critical Pair Analysis is able to detect not only direct, but also indirect tainted flow.
The static taint analysis (i) identifies flows that need to be further reviewed, since tainted nodes may be created by an API call and used or manipulated by another API call later without having the necessary privileges, and (ii) can be used to systematically design dynamic security tests for broken access control. The dynamic taint analysis checks if potential broken access control risks detected during the static taint analysis really occur. We apply the approach to a part of the GitHub GraphQL API. The application illustrates that our analysis supports the detection of two types of broken access control systematically: the case where users of the API may not be able to access or manipulate information, although they should be able to do so; and the case where users (or attackers) of the API may be able to access/manipulate information that they should not.%
\end{abstract}
\maketitle
\section{Introduction}\label{sec:intro}
A \emph{Graph API} models the data managed by a web application based on the graph data model, i.e., data objects are represented by nodes and relationships by edges.
While initially they emerged to facilitate data management in social networks, Graph APIs are currently used by an increasing number of software applications in the area of service-oriented and cloud computing~\cite{Zhang23,Quina-MeraFGR23}.
A prominent specification and query language for Graph APIs is \emph{GraphQL}~\cite{GraphQLFoundation__GraphQLquerylanguageyourAPI}.
GraphQL was introduced by Facebook in 2012 as part of the development of mobile applications, and is now used in various web and mobile applications, e.g., GitHub, Pinterest, Paypal, etc.

\emph{Security}~\cite{Graw07} is particularly important in the area of web and mobile applications.
Graph APIs are still relatively new, which means that a wide range of methods and tools for security analyses are not yet available for them~\cite{Quina-MeraFGR23,Pagey+23}.
\emph{Taint analysis} is a typical representative of such security analyses.
It is a particular instance of data flow analysis~\cite{PezzeYoung07}
and systematically tracks certain user inputs through the application.
\emph{Broken access control} is considered as one of the most prominent security vulnerabilities for web applications~\cite{TeamOWASPTop10__A01BrokenAccessControlOWASPTop102021}.
Access control enforces policies such that users cannot act outside their intended permissions. Failures typically lead to unauthorized information disclosure, modification, or destruction of all data or performing a business function outside the user's limits. Typical access control failures occur when an API can be accessed with (i) missing or misunderstood access control, or (ii) wrongly implemented access control.
Such failures occur in practice, as demonstrated by issues in the GitHub community forum\eg issues 110618, 106598, 85661~\cite{GitHub__Discussions}.

In this paper, we present a \emph{taint analysis for Graph APIs} focusing on \emph{broken access control} (BAC).
Specifically, our approach builds on the concept of taint analysis, which propagates taint labels through the implementation of a program, to trace the flow of sensitive data through the Graph API.
Graph transformation is a technique dedicated to formalizing graph manipulations and is therefore a natural foundation for formalizing a \gapi.
Specifically, we formalize both a \gapi and the taint analysis using concepts from graph transformation~\cite{1997handbook,Ehrig_2006_FundamentalsAlgebraicGraphTransformation,HeckelT20} and rely on related, formally founded tool support to perform part of the analysis.
A \emph{security analyst} can use the taint analysis to systematically and soundly check for BAC vulnerabilities.
To the best of our knowledge, no other approach to checking for BAC vulnerabilities simultaneously supports key feature of our approach: use of taint analysis, native graph formalization, and sound detection---see discussion in \autoref{sec:related}.
We illustrate the taint analysis based on GraphQL, but our approach is applicable to any kind of Graph APIs that can be formalized using graph transformation.

The presented taint analysis comprises three segments: the setup, the static analysis, and the dynamic analysis. During \emph{setup}, the security analyst can taint nodes of the Graph API that represent sensitive data.
This Graph API is represented formally using graph transformation rules typed over a type graph.
The security analyst thus taints nodes in the type graph, representing node types entailing sensitive data that may be processed by the Graph API.
From these tainted nodes, we can characterize graph transformation rules as Graph API \emph{sources}, where tainted data are inserted\ie created by the rule, and \emph{sinks}, where tainted data are used\ie changed by the rule, and may thus lead to a security risk.
The \emph{static analysis} supports the security analyst in \emph{validating} the access control policies, i.e. finding missing or misunderstood access control policies.
This is achieved by automatically generating pairs of source and sinks that may be risky in terms of leaking sensitive data and thus represent potential BAC vulnerabilities.
The security analyst can then systematically check if the policy description indicates that it properly secures these risky flows.
Subsequently, the \emph{dynamic analysis} supports the security analyst in \emph{verifying} that the access control policies are implemented correctly.
If the security analyst expects a risky flow based on the results of the static analysis, then they can systematically try to uncover a corresponding BAC vulnerability in the Graph API implementation.
We demonstrate that the static analysis is \emph{sound} w.r.t. finding potential \emph{direct} BAC vulnerabilities\ie when a sink directly follows a source, as well as a type of \emph{indirect} ones\ie when other data manipulations occur between source and sink. However, since the static analysis is incomplete, we integrate it with a subsequent \emph{complete} dynamic analysis.

Graph transformation is a formal paradigm with many applications.
A \emph{graph transformation system} is a collection of graph transformation rules that, in union, serve a common purpose.
In this paper, a \emph{graph transformation system} is used to formally describe a Graph API.
For many applications (see \cite{Lambers_2018_Multigranularconflictdependencyanalysissoftwareengineeringbasedgraphtransformationa} for a survey involving 25 papers), it is beneficial to know all conflicts and dependencies that can occur for a given pair of rules.
A conflict is a situation in which one rule application renders another rule application inapplicable.
A dependency is a situation in which one rule application needs to be performed such that another rule application becomes possible.
We will use \emph{dependency analysis} to discover possible tainted information flows that may lead to broken access control vulnerabilities.

This article is an extension of a paper published in ICGT '24~\cite{Lambers_2024_TaintAnalysisGraphAPIsFocusingBrokenAccessControl}. The article introduces the following extensions. First, it extends the formalization to support a certain type of indirect vulnerabilities (see \autoref{thm:general}), whereas the paper supported only direct vulnerabilities. Moreover, we have added a number of insightful examples illustrating the limitation of the paper formalization as well as the new formal results. As a second extension, the article introduces a new type of coverage, namely \emph{role coverage} (see \autoref{def:role-coverage}), which addresses whether derived tests cover the various roles in an access control policy. Finally, the article introduces a new application of the presented taint analysis---see \autoref{subsec:graphql_real-issue}. This application reproduces the issue 106598 reported in the GitHub community forum, addresses a larger part of the GraphQL schema language, and increases the level of real-world detail covered by the examples. Additionally, we have made the following improvements and changes: we elaborated on preliminaries regarding the dependency analysis---see \autoref{subsec:preliminaries_graph-transformation}; and we restructured the section describing the applications of the taint analysis to GitHub\ie\autoref{sec:graphql}, to render it more informative; we have added a summary of the GitHub access control policy, an outline of the application setting, as well as a discussion of the applications.

The rest of this paper is structured as follows:
Section~\ref{sec:preliminaries} revisits preliminaries.
Section~\ref{sec:analysis} presents taint analysis\ie the setup, the static, and the dynamic analysis, in further detail, based on a basic running example.
Section~\ref{sec:graphql} applies the analysis to a part of the GitHub GraphQL API~\cite{GitHub__GitHubGraphQLAPIdocumentation}.
Section~\ref{sec:related} is dedicated to related work while Section~\ref{sec:conclusion} to conclusion.
\section{Preliminaries}
\label{sec:preliminaries}
First, we review the basics of Graph APIs and GraphQL, a prominent Graph API discussed further in the paper; in the same section, we introduce the running example used in the remainder.
Then we present the basics of access control.
Finally, we recall the double-pushout (DPO) approach to graph transformation as presented in~\cite{Ehrig_2006_FundamentalsAlgebraicGraphTransformation}. We reconsider dependency notions on the transformation and rule level~\cite{Lambers_2008_EfficientConflictDetectionGraphTransformationSystemsEssentialCriticalPairs,Lambers_2018_InitialConflictsDependenciesCriticalPairsRevisited,Lambers_2018_Multigranularconflictdependencyanalysissoftwareengineeringbasedgraphtransformationa} for the DPO approach, including dependency reasons.

\subsection{Graph APIs and Running Example}\label{subsec:preliminaries_graphAPIs}
A Graph API models the data managed by a web application based on the graph data model, i.e., data objects are represented by nodes and relationships by edges.
Graph APIs emerged to facilitate data management in social networks which similarly relied on graph-structured data. Two familiar examples of Graph APIs are the \emph{Facebook API}~\cite{Meta__OverviewGraphAPIDocumentation} and the \emph{GitHub GraphQL API}~\cite{GitHub__GitHubGraphQLAPIdocumentation} (henceforth referred to as GitHub API). The GitHub API is based on the \emph{GraphQL} specification and query language that describes the capabilities and requirements of graph data models for web applications~\cite{GraphQLFoundation__GraphQLquerylanguageyourAPI}. In the remainder, we focus on APIs which are based on GraphQL.
\begin{figure}[t]
	\centering

\begin{minipage}[t]{.85\textwidth}
\begin{lstlisting}[frame=tlrb, numbers=none]{Name}
type User {
	username: String!
	projects: [Project]!
	repos: [Repository]!
}

type Repository {
	id: ID!
	name: String!
	description: String!
	url: String!
	owner: Repository!
}

type Issue {
	id: ID!
	body: String!
	repo: Repository!
}

type Project {
	name: String!
	id: ID!
	body: String
}

type Query {
	getProject(projectId: ID!): Project
}

type Mutation {
	createRepo(newName: String!, newDescription: String!, repoURL: String!): ID
	updateRepo(repoId: ID!, newName: String!, newDescription: String!): Boolean
	createProject(projectId: ID!, body:String!, newName:String!): ID
	deleteProject(projectId: ID!): Boolean
	createIssue(repoId: ID!, newBody: String!): ID
	updateIssue(repoId: ID!, issueId, ID!, newBody: String!): Boolean
	deleteIssue(repoId: ID!, issueId: ID!): Boolean
}\end{lstlisting}
\end{minipage}
	\caption{\label{fig:schema}The GraphQL schema for the running example.}
\end{figure}

GraphQL uses a \emph{schema}, i.e., a strongly typed contract between the client and the server, to define all exposed types of an API as well as their structure. Schemata of GraphQL APIs are either publicly available or can be discovered using distinguished \emph{introspection} queries. An example of a schema is shown in \autoref{fig:schema}. The schema captures the GraphQL API of a simplified version of a platform that allows developers to manage their code as well as their development efforts, similar to GitHub. The schema is defined in the GraphQL schema language~\cite{GraphQLFoundation__SchemasTypesGraphQL} and contains the following object types: the \texttt{User}, which represents a user of the platform; the \texttt{Repository}, which represents a code repository hosted on the platform; the \texttt{Issue}, which represents a task related to the development work being performed in a repository; and the \texttt{Project}, which represents a task-board that can be used for planning and organizing development work. Attributes and relationships are captured by fields within types, where squared brackets refer to a list of objects. The exclamation mark renders a field non-nullable.

GraphQL defines two types of operations that can be performed on the server: \emph{queries} are used to fetch data, whereas \emph{mutations} are used to manipulate\ie create, update, or delete data. Queries and mutations are also defined in the schema, within the distinguished types \texttt{Query} and \texttt{Mutation}, respectively. \autoref{fig:schema} shows a minimal list of mutations for the platform in our example: the mutations \textit{createRepo}, \textit{updateRepo} create and update an instance of a \texttt{Repository}, respectively.%
\subsection{Access Control Policies}\label{subsec:preliminaries_access-control}

In various types of software systems, e.g., financial and safety-critical, the adequate security of the information and the systems themselves constitutes a fundamental requirement. \emph{Access control} refers to ensuring that users are allowed to perform activities which may access information or resources only when they have the required rights. An access control \emph{policy} is a collection of high-level requirements that manage user access and determine who may access information and resources under which circumstances~\cite{Hu_2017_VerificationTestMethodsAccessControlPoliciesModels}. Correspondingly, a policy is \emph{broken} when these requirements are not fulfilled.

While there are various methods of realizing policies, we focus on the \emph{role-based} method. In a role-based policy, a security analyst defines roles, i.e., security groups, whose access rights are defined according to roles held by users and groups in organizational functions; the analyst assigns a role to each user who, once authorized, can perform the activities permitted by said role. In the remainder, we assume that, besides the security analyst, users cannot perform administrative actions\ie cannot change their own role. A role-based access control policy is broken when a user performs an activity that is not permitted by their role.

A requirement of a role-based policy for our running example could state that \emph{a user may only update the information of a repository, if that user is the owner of the repository}. In practice, the \gapi of the platform should ensure that the query \textit{updateRepo} can only be executed by a user who is authorized to update the repository\ie who has the role \emph{owner}. An update of the repository information by a \emph{collaborator}\ie a user who does not belong to this group, would break the policy.

\subsection{Graph Transformation and Static Dependency Analysis}\label{subsec:preliminaries_graph-transformation}
Throughout this paper we always consider graphs (and graph morphisms) typed over some fixed type graph $\mathit{TG}$ via a typing morphism $type_G:G\rightarrow TG$ as presented in \cite{Ehrig_2006_FundamentalsAlgebraicGraphTransformation}.
A graph morphism $m:G\rightarrow H$ between two graphs consists of two mappings $m_V$ and $m_E$ between their nodes and edges being both type-preserving and structure-preserving w.r.t. source and target functions.
Note that in the main text we denote inclusions by $\hookrightarrow$ and all other morphisms by $\rightarrow$.

\emph{Graph transformation} is the rule-based modification of graphs. A {\em rule} consists of two graphs:
$L$ is the left-hand side (LHS) of the rule representing a pattern that has to be found to apply the rule.
After the rule application, a pattern equal to $R$,  the right-hand side (RHS), has been created.
The intersection $K$ is the graph part that is not changed; it is equal to  $L \cap R$ provided that the result is a graph again.
The graph part that is to be deleted is defined by $L \setminus (L \cap R)$, while $R \setminus (L \cap R)$ defines the graph part to be created.
We consider a graph transformation system just as a set of rules.

A \emph{direct graph transformation} $G \stackrel{m,r}{\Longrightarrow} H$ between two graphs $G$ and $H$ is defined by first finding a graph morphism $m$ of the LHS $L$ of rule $r$ into $G$ such that $m$ is injective, and second by constructing $H$ in two passes:
(1) build $D := G \setminus m(L \setminus K)$,
i.e., erase all graph elements that are to be deleted;
(2) construct $H := D \cup m\rq{}(R \setminus K)$.
The morphism $m\rq{}$ has to be chosen such that a new copy of all graph elements that are to be created is added.
It has been shown that $r$ is applicable at $m$ iff $m$ fulfills the {\em dangling condition}.
It is satisfied if all adjacent graph edges of a graph node to be deleted  are deleted as well, such that $D$ becomes a graph.
Injective matches are usually sufficient in applications and w.r.t. our work here, they allow explaining constructions with more ease than for general matches.
In categorical terms, a direct transformation step is defined using a so-called double pushout as in the following definition.
Thereby step (1) in the previous informal explanation is represented by the first pushout (PO1) and step (2) by the second pushout (PO2)~\cite{Ehrig_2006_FundamentalsAlgebraicGraphTransformation}.

\begin{defi}[(read/write) rule and transformation]\label{def:transformation}
	A {\em rule} $r$ is defined by $r = (L  \stackrel{le}{\hookleftarrow} K \stackrel{ri}{\hookrightarrow} R)$ %, NAC)$
	with $L, K,$ and $R$ being graphs connected by two graph inclusions and $K = L \cap R$.
	A rule $r$ is a \emph{read rule} if $le$ and $ri$ are identity morphisms.
	A rule $r$ is a \emph{write rule} if it is not a read rule.
	\begin{flushleft}
		\begin{minipage}{0.55\textwidth}

			\noindent {\em 	A {\em direct transformation} $G \stackrel{m,m',r}{\Longrightarrow} H$ applying $r$ to $G$ consists of two pushouts as depicted on the right.
				Rule $r$ is {\em applicable} and the injective morphism $m: L \rightarrow G$ ($m':R \rightarrow H$) is called a {\em match} (resp. co-match) if there exists a graph $D$ such that $(PO1)$ is a pushout.}
		\end{minipage}%
		\hspace{.05cm}
		\begin{minipage}{0.4\textwidth}
			\begin{center}
				\begin{tikzpicture}[show background rectangle]
					\fill (1,2) node[inner sep=1pt] (L) {$L$};
					% \fill (-1,2) node[inner sep=1pt] (N) {$N_i$};
					\fill (3,2) node[inner sep=1pt] (K) {$K$};
					\fill (5,2) node[inner sep=1pt] (R) {$R$};
					\fill (1,0.5) node[inner sep=1pt] (G) {$G$};
					\fill (3,0.5) node[inner sep=1pt] (D) {$D$};
					\fill (5,0.5) node[inner sep=1pt] (H) {$H$};

					\fill (2,1.25) node {$(PO1)$};
					\fill (4,1.25) node {$(PO2)$};

					{\pgfsetarrowsend{latex}
						\draw (K) -> node[above,inner sep=1pt]{$\scriptstyle{}$} (L);
						%\draw (L) -> node[above,inner sep=1pt]{$\scriptstyle{n_i}$} (N);
						%\draw (N) -> node[above,inner sep=1pt]{$\scriptstyle{q_i}$} (G);
						\draw (K) -> node[above,inner sep=1pt]{$\scriptstyle{}$} (R);
						\draw (D) -> node[below,inner sep=1pt]{$\scriptstyle{}$}(G);
						\draw (D) -> node[below,inner sep=1pt]{$\scriptstyle{}$}(H);
						\draw (L) -> node[left,inner sep=3pt]{$\scriptstyle{m}$} (G);
						\draw (K) -> node[left,inner sep=1pt]{$\scriptstyle{}$}(D);
						\draw (R) -> node[right,inner sep=1pt]{$\scriptstyle{m\rq{}}$}(H);
					}
				\end{tikzpicture}
			\end{center}
		\end{minipage}
	\end{flushleft}
\end{defi}

We may notice that a read rule is not really a transformation rule, in the sense that if we apply it to a graph $G$, the result is again $G$, i.e. it causes no transformation on it. We use read rules to model operations that just read data on a graph, without transforming it. In particular, when  applying the rule $r = (L  \stackrel{le}{\hookleftarrow} L \stackrel{ri}{\hookrightarrow} L)$  to $G$ via a morphism $m$, we consider that some given user is reading the data included in $m(L)$.

Given a pair of transformations, a \emph{produce-use dependency}~\cite{Ehrig_2006_FundamentalsAlgebraicGraphTransformation} occurs if the co-match of the first transformation via rule $r_1$ produces a match for the second transformation that was not available yet before applying the first transformation.  More precisely, the \emph{creation graph} $C_1$ includes all the the elements (nodes and edges) that are produced by $r_1$. In particular, $C_1$ is the smallest subgraph of $R_1$ that includes $R_1 \setminus K_1$. Notice that, in general, $R_1 \setminus K_1$ is not a graph, because it may include edges whose source or target is in $K_1$, i.e., $C_1$ would include, in addition to the edges in $R_1 \setminus K_1$, their source and target. These additional nodes are called boundary nodes and summarized in the \emph{boundary graph} $B_1$, i.e., $B_1 = C_1 \cap K_1$. Then, the  \emph{dependency reason} $C_1 \stackrel{o_1}{\hookleftarrow}  S_1 \stackrel{q_{12}}\rightarrow L_2$ comprises elements produced by the first and used by the second rule giving rise to the dependency. The formal construction of this dependency reason is depicted in \autoref{fig:dependencyChar}, where technically square (1) is an initial pushout and square (2) is the pullback of $m'_1 \circ c_1$ and $m_2$, i.e. $S_1$ contains all elements from $C_1$ and $L_2$ that are mapped to the same element in $H_1$.
In \cite{Lambers_2019_Granularityconflictsdependenciesgraphtransformationsystemstwodimensionalapproach} it is explained in detail how to obtain, using \emph{static dependency analysis}, such dependency reasons for a pair of rules for which transformation pairs in produce-use dependency exist.
Each transformation pair in produce-use dependency comes with a unique dependency reason (completeness).

\begin{defi}[produce-use dependency, dependency graph]\label{def:dependency}
	Given a pair of direct transformations $(t_1, t_2) = (G \stackrel{m_1,m'_1,r_1}{\Longrightarrow} H_1, H_1 \stackrel{m_2,m'_2,r_2}{\Longrightarrow} H_2)$ applying rules $r_1: L_1 \stackrel{le_1}{\hookleftarrow} K_1 \stackrel{ri_1}{\hookrightarrow} R_1$ and $r_2: L_2 \stackrel{le_2}{\hookleftarrow} K_2 \stackrel{ri_2}{\hookrightarrow} R_2$ as depicted in Fig.~\ref{fig:dependencyChar}.
	Square (1) in Figure~\ref{fig:dependencyChar} can be constructed as the initial pushout over morphism $\mathit{ri}_1$.
	It yields the {\em boundary graph} $B_1$ and the {\em creation graph} $C_1$.
	The \emph{transformation pair}  $(t_1,t_2)$ is in \emph{produce-use dependency} if there does not exist a morphism $x_{21}: L_2 \rightarrow D_1$ such that $h_1 \circ x_{21} = m_2$.
	The \emph{dependency graph} $DG(\mathcal{R})$ for a set of rules $\mathcal{R}$ consists of a node for each rule in $\mathcal{R}$ and an edge $(r_{1},r_{2})$ from $r_1$ to $r_2$ if there exists a transformation pair $(t_1, t_2) = (G \stackrel{r_1}{\Longrightarrow} H_1, H_1 \stackrel{r_2}{\Longrightarrow} H_2)$ in produce-use dependency.
\end{defi}

\begin{figure}[t]

	\centering
	\begin{tikzpicture}[show background rectangle]
		\fill (1,2) node[inner sep=1pt] (L2) {$L_2$};
		\fill (3,2) node[inner sep=1pt] (K2) {$K_2$};
		\fill (5,2) node[inner sep=1pt] (R2) {$R_2$};
		\fill (0,0.5) node[inner sep=1pt] (H1) {$H_1$};
		\fill (3,0.5) node[inner sep=1pt] (D2) {$D_2$};
		\fill (5,0.5) node[inner sep=1pt] (H2) {$H_2$};

		{\pgfsetarrowsend{latex}
			\draw (K2) -> node[above,inner sep=1pt]{$\scriptstyle{le_2}$} (L2);
			\draw (K2) -> node[above,inner sep=1pt]{$\scriptstyle{ri_2}$} (R2);
			\draw (D2) -> node[below,inner sep=1pt]{$\scriptstyle{g_2}$}(H1);
			\draw (D2) -> node[below,inner sep=1pt]{$\scriptstyle{h_2}$}(H2);
			\draw (L2) -> node[right,inner sep=3pt,pos=.7]{$\scriptstyle{m_2}$} (H1);
			\draw (K2) -> node[left,inner sep=1pt]{$\scriptstyle{d_2}$}(D2);
			\draw (R2) -> node[right,inner sep=1pt]{$\scriptstyle{m\rq{}_2}$}(H2);
		}

		\fill (-2,2.75) node {$(1)$};
		\fill (0,2) node {$(2)$};

		\fill (0,4.5) node[inner sep=1pt] (A1) {$S_1$};
		\fill (-1,3.5) node[inner sep=1pt] (C1) {$C_1$};
		\fill (-3,3.5) node[inner sep=1pt] (B1) {$B_1$};
		\fill (-1,2) node[inner sep=1pt] (R1) {$R_1$};
		\fill (-3,2) node[inner sep=1pt] (K1) {$K_1$};
		\fill (-5,2) node[inner sep=1pt] (L1) {$L_1$};
		\fill (-3,0.5) node[inner sep=1pt] (D1) {$D_1$};
		\fill (-5,0.5) node[inner sep=1pt] (G) {$G$};

		{\pgfsetarrowsend{latex}
			\draw[dashed] (A1) -> node[above,inner sep=1pt]{$\scriptstyle{x}$} (B1);
			\draw (A1) -> node[above,inner sep=3pt]{$\scriptstyle{o_1}$} (C1);
			\draw (A1) -> node[above,inner sep=22pt]{$\scriptstyle{q_{12}}$} (L2);
			\draw (B1) -> node[above,inner sep=1pt]{$\scriptstyle{b_1}$} (C1);
			\draw (B1) -> node[left,inner sep=1pt]{$\scriptstyle{}$} (K1);
			\draw (C1) -> node[left,inner sep=1pt]{$\scriptstyle{c_1}$} (R1);

			\draw (K1) -> node[above,inner sep=1pt]{$\scriptstyle{ri_1}$} (R1);
			\draw (K1) -> node[above,inner sep=1pt]{$\scriptstyle{le_1}$} (L1);
			\draw (D1) -> node[below,inner sep=1pt]{$\scriptstyle{h_1}$}(H1);
			\draw (D1) -> node[below,inner sep=1pt]{$\scriptstyle{g_1}$}(G);
			\draw (R1) -> node[left,inner sep=3pt,pos=.7]{$\scriptstyle{m'_1}$} (H1);
			\draw[dashed] (L2) -> node[above,inner sep=1pt]{$\scriptstyle{x_{21}}$} (D1);
			\draw (K1) -> node[left,inner sep=1pt]{$\scriptstyle{d_1}$}(D1);
			\draw (L1) -> node[left,inner sep=1pt]{$\scriptstyle{m_1}$}(G);
		}
	\end{tikzpicture}
	\caption{Illustration of dependency and dependency reason.}

	\label{fig:dependencyChar}

\end{figure}

For a given graph transformation system, a \textit{static dependency analysis} technique is a means to compute a list of all pairwise produce-use dependencies.
Inspired by the related concept from term rewriting, \textit{critical pair analysis} (CPA, \cite{Plump_1994_Criticalpairstermgraphrewriting}) has been the established static conflict and dependency analysis technique for three decades.
CPA reports each dependency as a critical pair, that is, a minimal example graph together with a pair of rule applications from which a dependency arises.
The \textit{multi-granular static dependency analysis} \cite{Born_2017_GranularityConflictsDependenciesGraphTransformationSystems,Lambers_2018_InitialConflictsDependenciesCriticalPairsRevisited}) supports the computation of dependencies for graph transformation systems on three granularity levels:
on binary granularity, it reports if a given rule pair contains a dependency at all;
on coarse granularity, it reports \textit{minimal dependency reasons};
fine granularity is, roughly speaking, the level of granularity provided by critical pairs.
Coarse-grained results have been shown to be more usable than fine-grained ones in a diverse set of scenarios and can be used to compute the fine-grained results much faster~\cite{Lambers_2018_Multigranularconflictdependencyanalysissoftwareengineeringbasedgraphtransformationa}. We use the coarse granularity also in the following.
\section{Taint Analysis for Graph APIs}\label{sec:analysis}
This section presents our main contribution, the taint analysis for Graph APIs focusing on broken access control. The analysis assumes the following input: a \emph{\gapi schema} (or a \gapi endpoint from which the schema can be introspected---see \autoref{subsec:preliminaries_graphAPIs}), comprising object and edge types as well as API calls\ie queries and mutations, and a \emph{role-based access control policy} description (henceforth referred to as policy), consisting of high-level access control requirements like the one mentioned in \autoref{subsec:preliminaries_access-control}.
The analysis comprises three segments: the setup, the static analysis, and the dynamic analysis---see \autoref{fig:activity-diagram}. The \emph{setup} segment comprises the derivation of a \emph{tainted Graph API}\ie a \gapi \emph{schema with tainted nodes} as well as \emph{source and sink graph transformation rules} derived from the policy and the \gapi schema; we describe the setup in detail in \autoref{subsec:analysis_formalization-API}. Once the setup is complete, the \emph{static analysis} can be performed. The output of the static analysis is a set of \emph{rule dependencies} which indicate potential occurrences of a broken policy. We present the formal concepts of the static analysis in \autoref{subsec:analysis_static}. The completion of the static analysis enables the execution of the \emph{dynamic analysis}. The dynamic analysis relies on the output of the static analysis and itself generates a \emph{test case report} which either verifies implementations of policy rules or exposes rules where the policy is broken. We present the formal concepts of the dynamic analysis in \autoref{subsec:analysis_dynamic}.\\
\noindent %FIXED indent
In the following, illustrations of type graphs and graph transformation rules are mostly based on the well-known \emph{Henshin} project~\cite{Arendt_2010_HenshinAdvancedConceptsToolsInPlaceEMFModelTransformations}, which is also used for tool support by the static analysis.
\begin{figure}[t]
	\centering
	\includegraphics[scale=.65]{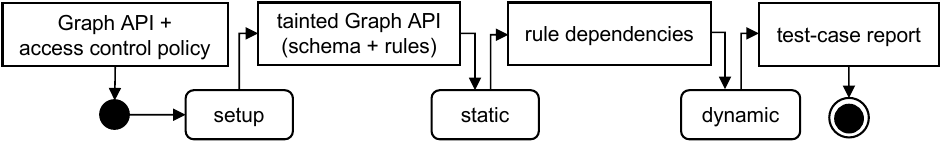}
	\caption{\label{fig:activity-diagram} The segments of the presented approach, illustrated as an activity diagram.}
\end{figure}
\subsection{Setup}\label{subsec:analysis_formalization-API}
As mentioned in \autoref{sec:intro}, the analysis relies on a formal description of a Graph API based on a graph transformation system, and focuses on broken access control. Therefore, our approach to taint analysis relies on the security analyst (i) obtaining both a representation of the GraphQL schema as a type graph $TG$ and a representation of the API\ie all possible API calls, as a set of graph transformation rules $\mathcal{R}$ typed over $TG$;  (ii) having access to a role-based policy specification for the Graph API; (iii) identifying nodes in the Graph API which represent sensitive data to be checked for broken access control by the taint analysis, as well as the rules in $\mathcal{R}$ which create and manipulate those data. We formally define these concepts below.
\begin{figure}[t]
	\centering
	\includegraphics[width=.75\textwidth]{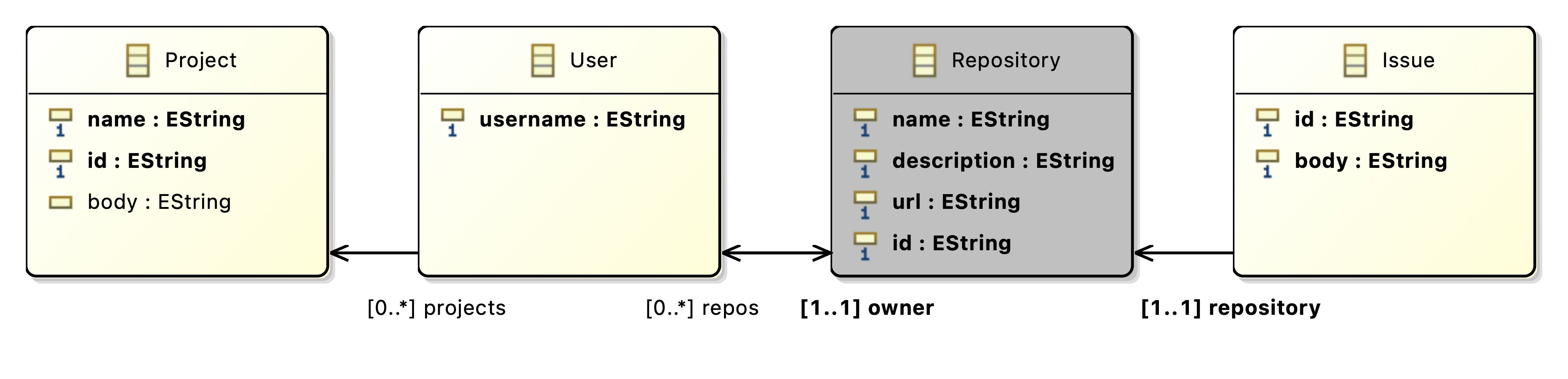}
	\caption{\label{fig:typeGraph} The tainted type graph, where \texttt{Repository} is a tainted node (in gray).}
\end{figure}
\subsubsection{Graph APIs}

We formally represent a Graph API by a set of graph transformation rules $\mathcal{R}$ typed over $TG$. We represent each API call as a direct graph transformation.
A sequence of API calls, denoted here as API execution, is a graph transformation of arbitrary length.
It starts with some initial graph $G_0$, representing the data present in the API prior to the execution.

\begin{defi}[graph API, call, execution]\label{def:graphAPI}
	A \emph{Graph API} with a schema $TG$ is a set of graph transformation rules $\mathcal{R} = \{r_i | i \in I\}$ typed over $TG$.
	A \emph{Graph API execution} consists of a (possibly infinite) graph transformation sequence $(t : G_0 \stackrel{r_1}{\Rightarrow} G_1 \stackrel{r_2}{\Rightarrow} G_2 \stackrel{r_3}{\Rightarrow} G_3 \ldots)$, starting with an
	%	fixed
	initial graph $G_0$, such that each $r_i$ is an element of $\mathcal{R}$.
	A \emph{Graph API call} is a direct transformation within a Graph API execution.
	We denote with $Sem(\mathcal{R})$ the set of all Graph API executions, and with $Sem_{call}(\mathcal{R})$ the set of all Graph API calls.
\end{defi}

\noindent An illustration of a type graph for the platform of the running example is shown in \autoref{fig:typeGraph}. The type graph corresponds to the entity types in the schema defined in the GraphQL schema language in \autoref{fig:schema}.
Based on the mutations defined within the distinguished type \texttt{Mutation} in the GraphQL schema, we also illustrate a Graph API according to \autoref{def:graphAPI}\ie a set of transformation rules typed over the type graph, in \autoref{subfig:repo-rules}.
We show an example of a Graph API execution comprising three calls\ie rules, in \autoref{fig:reducibleVulnerability}.
In the remainder, we illustrate rules using two different notations: the \emph{plain} notation (seen in Figures~\ref{fig:reducibleVulnerability},~\ref{fig:deleteIncidentWhole},~\ref{fig:createIncidentWhole}, and~\ref{fig:reducedVulnerabilityWhole}) illustrates the LHS and RHS as separate graphs and omits certain information\eg attributes; the \emph{compact} notation (seen in Figures~\ref{fig:rules} and~\ref{fig:rules-real} and generated by Henshin) embeds the LHS and RHS of a rule in one graph and does not omit any information. Note that figures using the plain notation, show rules on top, while the execution\ie graphs before and after the application of a rule, are shown at the bottom. In the Graph API execution in \autoref{fig:reducibleVulnerability}, a user creates a repository that is updated after the user creates a project and applies the rules \emph{createRepo}, \emph{createProject} and \emph{updateRepo}.

In this section, we will use a minimal API consisting of the rules \emph{createRepo} and \emph{updateRepo} (see \autoref{subfig:repo-rules} for a detailed illustration) to explain the static and dynamic analysis.
We explain the analysis for the overall API in \autoref{sec:graphql}.
\begin{figure}[t]
	\centering
	\includegraphics[scale=.5]{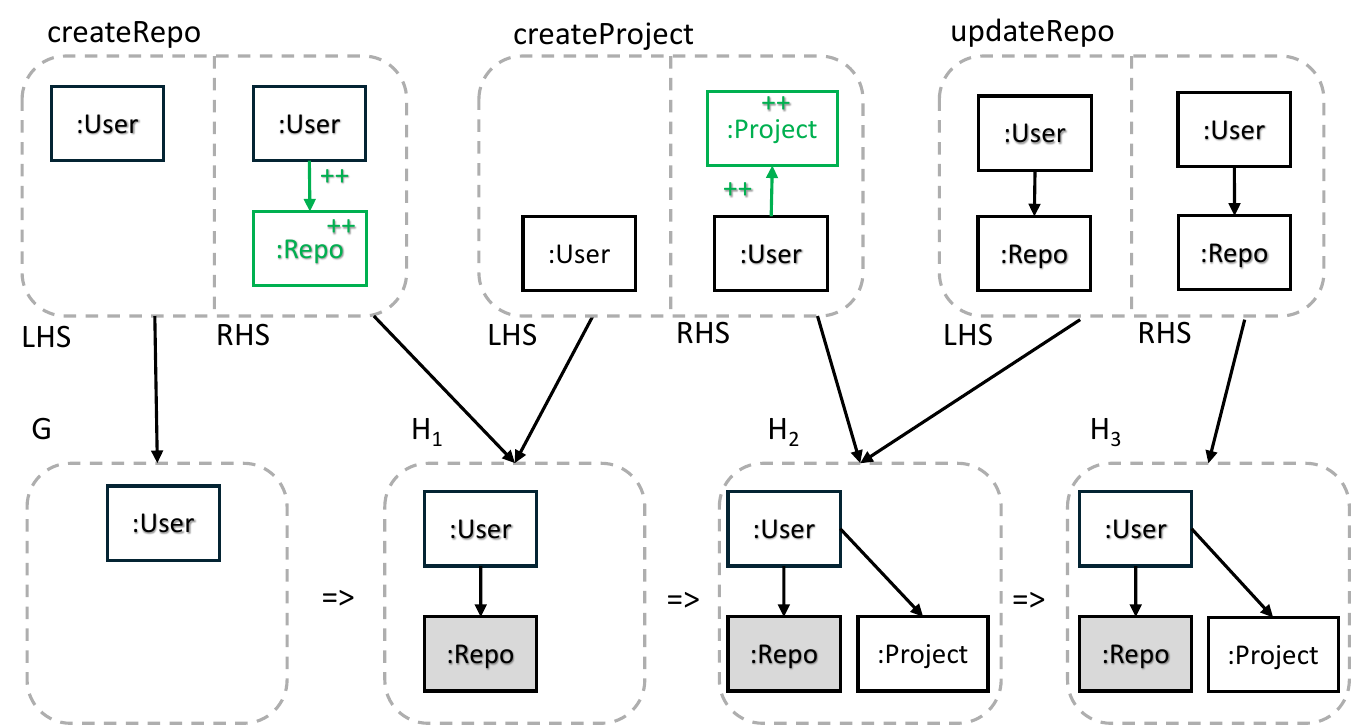}
	\caption{\label{fig:reducibleVulnerability} Graph API execution consisting of a sequence of three Graph API calls using the plain notation; the calls \emph{createRepo, createProject, updateRepo} are depicted on top, whereas the graphs $G, H_1, H_2, H_3$ at the bottom depict the results following the execution of each call}
\end{figure}

\subsubsection{Access control}
In a Graph API, access control\ie the fact that certain API calls that may create or manipulate sensitive data can only be executed by users with sufficient privileges, can be implemented via a role-based policy. In that case, users of the Graph API can only execute certain calls if they belong to a group with the required privileges. For instance, according to the GitHub access control policy, an API user with the role (or, in GitHub terms, \emph{permission level}) \emph{owner} has full control over the repository; an API user with the role \emph{collaborator} has only read and write access to the code and cannot, for example, delete the repository~\cite{GitHub__Permissionlevelspersonalaccountrepository}.

We formalize role-based access control policy and API calls whose access control is role-based below. The policy formalization is deliberately kept high-level, since we want to support cases where only rather informal policy descriptions are given. We merely assume that a security analyst is able to decide based on their expectations and on the policy description for each API call executed by some user with a specific role if said call is allowed or not. We call the policy formalization \emph{policy oracle}. For a call executed with a role, the policy oracle evaluates to true (false) if access (no access) should be granted. Below, we also present the formalization of the policy implementation. This formalization adheres to the same signature of the policy oracle and is similarly kept high-level, since we want to support APIs regardless of the specifics of their access control implementation. The policy implementation decides whether access is granted to the actual implementation of the call and is later used to formalize the notion of broken access control vulnerability.

\begin{figure}
	\begin{subfigure}[t]{\textwidth}
		\centering
		\includegraphics[width=.58\textwidth]{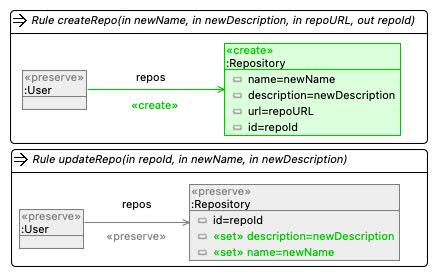}
		\caption{\label{subfig:repo-rules}Rules related to \texttt{Repository}.}
	\end{subfigure}
	\begin{subfigure}[t]{\textwidth}
		\centering
		\includegraphics[width=.9\textwidth]{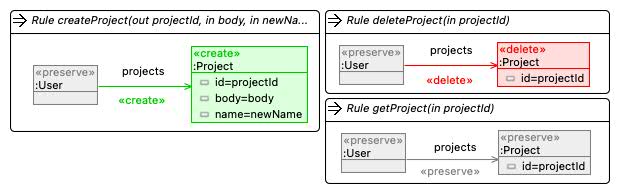}
		\caption{\label{subfig:project-rules}Rules related to \texttt{Project}.}
	\end{subfigure}
	\begin{subfigure}[t]{\textwidth}
		\centering
		\includegraphics[width=.5\textwidth]{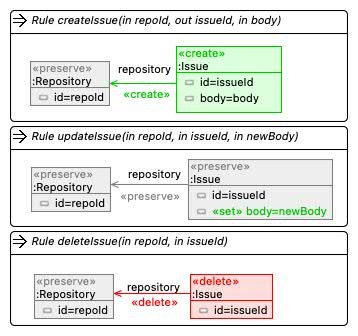}
		\caption{\label{subfig:issue-rules}Rules related to \texttt{Issue}.}
	\end{subfigure}
	\caption{\label{fig:rules} The sources (\emph{create} rules) and sinks (\emph{update}, \emph{get}, \emph{delete} rules) used in the running example and \autoref{sec:graphql}.}
\end{figure}

\begin{defi}[role-based Graph API call, policy oracle, policy implementation]\label{def:role-based-graphAPI}
	Given a Graph API $\mathcal{R}$ with schema $TG$.
	A \emph{role specification} $(RO, \leq_{RO})$ consists of a finite set of roles $RO$ and a partial order $\leq_{RO} \subseteq RO \times RO$.

	A \emph{role-based Graph API execution} $(t : G_0 \stackrel{r_1}{\Rightarrow} G_1 \stackrel{r_2}{\Rightarrow} G_2 \stackrel{r_3}{\Rightarrow} G_3 \ldots, role_t)$ is a Graph API execution such that each direct transformation in $t$ is executed with a specific role as given in
	$role_t: \mathbb{N} \rightarrow \mathcal{P}(RO)$, where $role_t(i)$ refers to the subset of roles of the i-th direct transformation $G_i \stackrel{r_{i+1}}{\Longrightarrow} G_{i+1}$.
	A \emph{role-based Graph API call} is a direct transformation within a role-based Graph API execution.
	We denote with $Sem^r(\mathcal{R})$ the set of all \emph{role-based Graph API executions}, and with $Sem^r_{call}(\mathcal{R})$ the set of all \emph{role-based Graph API calls}.

	A \emph{role-based access control policy oracle} $ACP = ((RO, \leq_{RO}), P)$ consists of a role specification $(RO, \leq_{RO})$ and a \emph{policy oracle} $P$\ie a total mapping $P: Sem^r_{call}(\mathcal{R}) \rightarrow \{true,false\}$.

	A \emph{role-based access control policy imlementation} $ACP^I = ((RO, \leq_{RO}), P^I)$ consists of a role specification $(RO, \leq_{RO})$ and a \emph{policy implementation} $P^I$\ie a total mapping $P^I: Sem^r_{call}(\mathcal{R}) \rightarrow \{true,false\}$ describing for a specific implementation $I$ of the Graph API calls if access is actually granted or not, respectively.
\end{defi}
Note that each call in an execution $t$ may be executed with a different role, which mirrors the typical interaction with a Graph API in practice---see \autoref{subsec:graphql_github}.

As discussed in \autoref{subsec:preliminaries_access-control}, a broken access control vulnerability occurs if policy requirements about user access to information and resources are not fulfilled. In our running example, such a vulnerability occurs if users without the proper privileges get access or are able to manipulate specific elements in the repositories\eg issues or projects, or repositories themselves; this would result to a data leak or even the loss of valuable information in the code repository.
Formally, a \emph{broken access control vulnerability} is captured by a role-based Graph API execution containing a call that breaks the access control policy.  This either happens if the call is executed with the specified role in a concrete policy implementation $P^I$, although this should not be possible according to the policy oracle $P$, or also if the call is not allowed with the specified role in a concrete policy implementation $P^I$, although this should be possible according to the policy oracle $P$.
\begin{defi}[Broken Access Control (BAC) vulnerability]\label{def:vulnerability}
	Given a graph API $\mathcal{R}$ with schema $TG$, a role-based access control policy oracle $ACP = ((RO, \leq_{RO}), P)$, and a policy implementation $P^I: Sem^r_{call}(\mathcal{R}) \rightarrow \{true,false\}$ associating the actual policy result to each API call.
	A \emph{Broken Access Control (BAC) vulnerability} is represented by a role-based Graph API execution $(t : G_0  \stackrel{r_1}{\Rightarrow} G_1 \stackrel{r_2}{\Rightarrow} G_2 \stackrel{r_3}{\Rightarrow} G_3 \ldots, roles_t)$ entailing an API call $c: G_i \stackrel{r_{i+1}}{\Longrightarrow} G_{i+1}$ such that $P(c)\neq P^I(c)$.
\end{defi}%

See \autoref{fig:reducibleVulnerability} again for an example of a Graph API execution.  Assume that the first and third call are executed by a \emph{user} as an owner of the repository and the second call by a \emph{collaborator}. For such a role-based Graph API execution, the security analyst may expect that access is granted for each of the calls. So the policy oracle would map to true for each of the calls in the role-based Graph API execution.  If the actual implementation of the Graph API execution raises a BAC exception and does not allow one of the calls (while being executed with the specified roles), we have a BAC vulnerability. In this case, the policy implementation is too conservative w.r.t. the expectations of the security analyst for this role-based Graph API execution to access control.  Imagine now that a repository can only be updated by a user of the repository with the role \emph{owner} or \emph{collaborator}, but not by any other arbitrary user.  Then, if the actual implementation of the Graph API execution does not raise a BAC exception when executing this third call with a role other than \emph{owner} or \emph{collaborator}, this again represents a BAC vulnerability. In this case, the access control policy implementation is too liberal w.r.t. the expectations of the security analyst for this role-based Graph API execution, and we have discovered a data leak.

We say that an access control policy oracle is \emph{stable under shift} if for each pair of sequentially independent calls, the policy oracle mapping for each of the calls remains identical when swapping the calls. Recall that a pair of graph transformation steps $ G \stackrel{r}{\Longrightarrow} G' \stackrel{r'}{\Longrightarrow} G''$ is \emph{sequentially independent} if their order can be reversed\ie
$G \stackrel{r'}{\Longrightarrow} H \stackrel{r}{\Longrightarrow} G''$ without affecting the end result $G''$.%
%racle
\begin{defi}[policy stable under shift]\label{def:stable-under-shift}
	A role-based access control policy oracle $ACP = ((RO, \leq_{RO}), P)$ is \emph{stable under shift} if for each pair of sequentially independent calls $G \stackrel{r}{\Longrightarrow} G' \stackrel{r'}{\Longrightarrow} G''$ executed with roles $ro,ro' \in \mathcal{P}(RO)$ respectively, it holds that $P(G \stackrel{r}{\Longrightarrow} G') = P(H \stackrel{r}{\Longrightarrow} G'')$ and $P(G' \stackrel{r'}{\Longrightarrow} G'') = P(G \stackrel{r'}{\Longrightarrow} H)$, the latter executed with roles $ro,ro' \in \mathcal{P}(RO)$ respectively. Analogously, a policy implementation $P^I$ is stable under shift, if this property also holds for the policy implementation.
\end{defi}

As policy oracles and implementations discussed in this article are assumed to disallow administrative actions (see \autoref{subsec:preliminaries_access-control}), they are stable under shift\ie swapping the order of two calls does not change the policy oracle mapping for said calls. \autoref{def:stable-under-shift} frames this characteristic in terms of graph transformation and is further used in a relevant proof later in the paper---see \autoref{thm:general}. A policy that would allow administrative actions\ie a call could assign the \emph{collaborator} role to a user, could possibly not satisfy this notion of stability\eg whether such an administrative call precedes or succeeds another call could change the policy mapping for said calls.
\subsubsection{Tainting}
The idea of the taint analysis is to systematically trace the flow of sensitive data through the Graph API with the aim of detecting potential BAC vulnerabilities.
The aim of the static taint analysis is to detect potential Graph API interactions that may not be properly secured.
The aim of the dynamic taint analysis is to systematically further test these risky interactions to detect if BAC vulnerabilities arise indeed.

To this extent, the security analysis first needs to taint those nodes in the schema representing sensitive data objects.
A BAC vulnerability is related to such a tainted node if a Graph API call introduces an instance node of a tainted node, and a later call reads or manipulates this node without the necessary privileges; a direct BAC vulnerability occurs when the introducing call is directly followed by the reading or manipulating call.
Based on the schema knowledge about tainted nodes, we also derive a tainted Graph API.
The tainted API distinguishes source rules (creating sensitive data) from sink rules (reading or manipulating sensitive data).
Tainted nodes and tainted graph APIs form the basis for the automatic static analysis.
It will detect potential interactions between pairs of source and sink rules representing risky flows w.r.t. passing sensitive data.
In this way, the static analysis will allow us to find all potential direct BAC vulnerabilities related to tainted nodes.
Moreover, we will argue under which conditions we can also find specific indirect BAC vulnerabilities.
We formalize these tainting concepts for our approach below.

\begin{defi}[tainted type graph, tainted nodes and their relation to a (direct) BAC vulnerability]\label{def:taint}
	Let $TG=(V_{TG},E_{TG},s_{TG},t_{TG})$ be a type graph. Then, $TG_t=(TG,T)$ is a \emph{tainted type graph} with $T \subseteq V_{TG}$ the set of \emph{tainted nodes}.

	Given a Graph API $\mathcal{R}$ with schema $TG$, a \emph{role-based access control policy oracle} $ACP = ((RO, \leq_{RO}), P)$ and policy implementation $ACP^I = ((RO, \leq_{RO}), P^I)$.
	The role-based Graph API execution $(t : G_0  \stackrel{r_1}{\Rightarrow} G_1 \stackrel{r_2}{\Rightarrow} G_2 \stackrel{r_3}{\Rightarrow} G_3 \ldots, roles_t)$ represents a {BAC vulnerability related to a tainted node} $s$ from $T$ if it entails a call $c: G_i \stackrel{(m_{i+1},r_{i+1})}{\Longrightarrow} G_{i+1}$ such that a node $n$ with $type_{V}(n) = s$ in the image of $m_{i+1}$ exists, leading to $P(c) \neq P^I(c)$, and $n$ does not belong already to $G_0$.
	A BAC vulnerability $(t, roles_t)$ is called \emph{direct} if the node $n$ with $type_{V}(n) = s$ is created in the call $G_{i-1} \stackrel{(m_{i},r_{i})}{\Longrightarrow} G_{i}$ in $t$ directly before $c: G_i \stackrel{(m_{i+1},r_{i+1})}{\Longrightarrow} G_{i+1}$.
\end{defi}

See \autoref{fig:typeGraph} for an example of a tainted schema, where we taint the node \texttt{Repository}---tainting is illustrated by the gray background color of the node in question.
In this case, this node is tainted because we want to further analyze if users without the proper privileges are able to manipulate created repositories. Assume that a BAC vulnerability arises in the implementation of the exemplary role-based Graph API execution in~\autoref{fig:reducibleVulnerability}, since it is possible to execute the third call successfully as a user with a role that does not have the required privileges.  Then we say that this is a BAC vulnerability related to the tainted repository node $R$.  This vulnerability is \emph{not a direct vulnerability}, as the \texttt{Repository} node has been created in the first call and not in the previous\ie second, call.

Finally, to obtain a tainted graph API, we define \emph{source rules}\ie rules that create instances of a tainted node in a schema, and \emph{sink rules}\ie  rules that read or manipulate instances of tainted nodes.

\begin{defi}[source rule, sink rule, tainted graph API]\label{def:tainted-graph-API}
	Given a tainted type graph $TG_t=(TG,T)$ and a Graph API $\mathcal{R}$ with schema $TG$.
	A rule $r = (L  \stackrel{le}{\hookleftarrow} K \stackrel{ri}{\hookrightarrow} R)$ from $\mathcal{R}$ is a \emph{source rule} if there exists a node $n$ in $R \setminus L$ such that $type_{R,V}(n) \in T$.
	A rule $r = (L  \stackrel{le}{\hookleftarrow} K \stackrel{ri}{\hookrightarrow} R)$ from $\mathcal{R}$ is a \emph{sink rule} if there exists a node $n$ in $L$ such that $type_{L,V}(n) \in T$.
	A \emph{tainted Graph API} $(TG_t,\mathcal{R},src,sink)$ consists of a tainted type graph $TG_t$, a Graph API typed over $TG$, and a corresponding set of source and sink rules, $src$ and $sink$.
\end{defi}

As a minimal example of a tainted Graph API comprising a set of source and sink rules, see the rules \textit{createRepo} and \textit{updateRepo} in \autoref{subfig:repo-rules}.
The rule \textit{createRepo} creates a tainted node\ie an instance of \texttt{Repository}, and is therefore a source rule, whereas \textit{updateRepo} refers to instances of \texttt{Repository} and is therefore a sink rule.  As we will explain in the next section, our static analysis identifies such pairs of source and sink rules such that the security analyst can review the policy description to check if the access to sensitive (i.e. tainted) nodes is properly secured.

\begin{figure}[t]
	\centering
	\includegraphics[width=.45\textwidth]{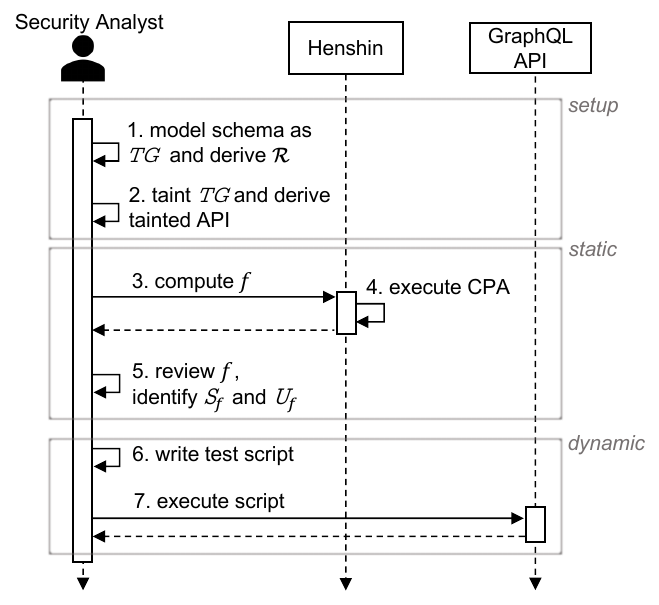}
	\caption{\label{fig:overview} An overview of the sequence of steps in the presented approach.}
\end{figure}
\subsection{Static Analysis}\label{subsec:analysis_static}
\autoref{fig:overview} shows an overview of the sequence of steps in the presented approach to taint analysis. The setup segment comprised steps 1 and 2. The next segment is the static analysis, which is based on the output of the setup.
The static analysis comprises the following steps: the provision of the schema and tainted Graph API obtained during setup by the security analyst (step 3 in \autoref{fig:overview}); the execution of the actual static analysis over the input provided by the security analyst, which is performed based on available tool support (step 4); and the review of output of the analysis by the security analyst (step 5) so that potential BAC vulnerabilities are identified due to an incomplete or incorrect access control policy. We describe this segment in detail below.

Owing to the graph-transformation-based formalization, the static analysis can automatically determine potential dependencies between the rules for a tainted Graph API by using CPA---see \autoref{subsec:preliminaries_graph-transformation}. In particular, we use the coarse granularity level of the  \emph{multi-granular static dependency analysis} to determine all minimal \emph{dependency reasons}.
The \emph{tainted information flow} for a tainted Graph API is then a symbolic yet complete description of all possibilities to use a tainted node directly after it has been created. In the running example, a dependency occurs when a \texttt{Repository} is created, which is afterward updated. The unique dependency reason between rules \textit{createRepo} and \textit{updateRepo} in \autoref{subfig:repo-rules} consists of the overlap of the complete right-hand side of the rule \textit{createRepo} with the left-hand side of the rule \textit{updateRepo}.
\begin{defi}[tainted information flow]\label{def:tainted-information-flow}
	Let $(TG_t,\mathcal{R},src,sink)$ be a tainted Graph API with $TG_t = (TG, T)$. We denote with  $src(s)$ ($sink(s)$) the subset of rules in $src$ ($sink$) creating (using) a node of type $s$ from $T$, respectively. Then, the \emph{tainted information flow} of the API is given by $f = \cup \{DR(r_{src},r_{sink})| r_{src} \in src(s), r_{sink} \in sink(s), s\in T\}$ with
	$DR(r_{src},r_{sink})$ the set of dependency reasons for the rule pair $(r_{src}, r_{sink})$.
\end{defi}

\autoref{subfig:henshin-output} shows the tainted information flow for the tainted \gapi from the running example---see dependency between \emph{createRepo} and \emph{updateRepo}.

The following theorem states that each BAC vulnerability related to a tainted node (see \autoref{def:taint}) is represented by a sequence of API calls containing a source rule application creating the tainted node followed at some point by a sink rule which uses the tainted node in a manner that violates the policy. We also say that such a BAC vulnerability entails a tainted flow instance.

\begin{thm}[BAC vulnerability and tainted flow instance]\label{thm:vulnerability-taint}
	Let $(t : G_0  \stackrel{r_1}{\Rightarrow} G_1 \stackrel{r_2}{\Rightarrow} G_2 \stackrel{r_3}{\Rightarrow} G_3 \ldots, roles_t)$ be a BAC vulnerability \emph{related to a tainted node $s$}. Then, the BAC vulnerability entails a source rule application via some rule $r_{src}$ in $src(s)$ creating a node $n$ with $type_{V}(n) = s$ and a sink rule application via a sink rule $r_{sink}$ in $sink(s)$ using $n$. We call this subsequence of $t$ a \emph{tainted flow instance}.
\end{thm}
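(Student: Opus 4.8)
The plan is to read off the sink directly from the definition of a BAC vulnerability related to a tainted node, and to recover the source by tracing the tainted node $n$ backwards through the transformation sequence until its point of creation. Both halves reduce to the provenance bookkeeping of the DPO approach, so no genuinely new machinery is needed; the work lies in stating that bookkeeping precisely.

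First I would extract the sink. By \autoref{def:taint}, the BAC vulnerability entails a call $c : G_i \stackrel{(m_{i+1}, r_{i+1})}{\Longrightarrow} G_{i+1}$ with $P(c)=false$ together with a node $n$ of type $s \in T$ lying in the image of the match $m_{i+1}$. Since $m_{i+1}$ is an injective, type-preserving morphism and $n \in m_{i+1}(L_{i+1})$, the unique preimage of $n$ is a node of $L_{i+1}$ of type $s$. By the definition of a sink rule (\autoref{def:tainted-graph-API}) this makes $r_{i+1}$ a sink rule, and because it uses a node of type $s$ it lies in $sink(s)$. Setting $r_{sink} := r_{i+1}$ gives the required sink application using $n$.

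Next I would locate the source by following $n$. The key fact, read off from the two pushouts of \autoref{def:transformation}, is that in a single step $G_{k-1} \stackrel{r_k}{\Longrightarrow} G_k$ with interface $D_k$ the nodes of $G_k$ split into two disjoint classes: those in the image of $D_k \to G_k$, which are exactly the nodes preserved from $G_{k-1}$ and each carry a unique predecessor there, and those in the image of the co-match $m'_k$ on $R_k \setminus K_k$, which are exactly the nodes freshly created by $r_k$. Tracing the fixed node $n \in G_i$ backwards through the sequence, at each step $n$ is therefore either preserved, with a unique predecessor, or created. Since $n \notin G_0$ by \autoref{def:taint}, the trace cannot continue all the way back to $G_0$, so there is a well-defined step $j \le i$ at which $n$ is created and from which it is preserved at every step up to $G_i$; hence it is literally the same node throughout.

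Finally I would identify the source rule and conclude. Because $n$ is created at step $j$, its unique preimage under the injective co-match $m'_j$ is a node of $R_j \setminus K_j$, hence of $R_j \setminus L_j$ since $K_j = L_j \cap R_j$, and it carries type $s \in T$. By the definition of a source rule (\autoref{def:tainted-graph-API}) this makes $r_j$ a source rule in $src(s)$; setting $r_{src} := r_j$ gives the required source application creating $n$. The subsequence of $t$ from the creating step $G_{j-1} \stackrel{r_j}{\Longrightarrow} G_j$ to the sink call $c$ is then the claimed tainted flow instance. The main obstacle is precisely this provenance argument: making rigorous that node identity is tracked unambiguously across a possibly long sequence, so that the node created at step $j$ and the node used by the sink are provably the same element. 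This rests on the injectivity of the interface morphisms $D_k \to G_{k-1}$ and $D_k \to G_k$ and on the identification of created elements with $R_k \setminus K_k$; once these standard DPO facts are invoked, the backward trace and the choice of $j$ are routine.
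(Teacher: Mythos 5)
Your proposal is correct and follows essentially the same route as the paper's proof: read off the sink rule from the violating call via type-preservation of the match (Definition~\ref{def:tainted-graph-API}), then use the fact that $n \notin G_0$ to locate the step where $n$ is created and identify that step's rule as the source rule. The only difference is that you make explicit the DPO provenance bookkeeping (preserved versus created nodes, injectivity of the interface morphisms) that the paper compresses into the single sentence ``there must be a call creating $n$ in $t$''; this added rigor is welcome but does not change the argument.
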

\begin{proof}
	There exists an API call $c: G_i \stackrel{(m_{i+1},r_{i+1})}{\Longrightarrow} G_{i+1}$ in $t$ such that $P(c) \neq P^I(c)$ for a node $n$ in the image of $m_{i+1}$ with $type_{V}(n) = s$ (\autoref{def:taint}).
	We thus have that $r_{i+1}$ equals a sink rule $r_{sink}$ in $sink(s)$, since the preimage of $n$ in the left-hand side of the rule $r_{i+1}$ then also has type $s$ (\autoref{def:tainted-graph-API}).
	Since the node $n$ is not already contained in $G_0$ (\autoref{def:taint}), there must be a call $G_{j} \stackrel{(m_{j+1},r_{j+1})}{\Longrightarrow} G_{j+1}$ with $0 \leq j < i$ creating $n$ in $t$.
	This means that $r_{j+1}$ equals a source rule $r_{src}$ in $src(s)$, since it creates node $n$ of tainted type $s$ (\autoref{def:tainted-graph-API}).
	Thus we can rewrite $t$ into $G_0 \ldots \stackrel{(m_{j+1},r_{j+1})}{\Longrightarrow} G_{j+1} \stackrel{r_{j+2}}{\Rightarrow} \ldots G_i \stackrel{(m_{i+1},r_{i+1})}{\Longrightarrow} G_{i+1} \ldots$ with $r_{j+1} = r_{src}$ and $r_{i+1} = r_{sink}$.
\end{proof}

Based on the definition of a tainted flow instance, we can now technically define a potential BAC vulnerability: such a vulnerability is an API execution that contains a tainted flow instance, although it has not yet been verified whether the vulnerability is covered by the policy oracle implementation\ie whether for each call $c$ in the API execution $P(c) = P^I(c)$.

Based on \autoref{thm:vulnerability-taint}, we can distinguish BAC vulnerabilities with atomic and non-atomic\ie more complex, tainted flow instances. The former occurs when the sink rule application violating the policy oracle is caused by a source rule application without the occurrence of any intermediate applications of some sink or source rule.
\begin{defi}[atomic tainted flow instance]\label{def:atomic-flow}
	Given a BAC vulnerability $(t : G_0  \stackrel{r_1}{\Rightarrow} G_1 \stackrel{r_2}{\Rightarrow} G_2 \stackrel{r_3}{\Rightarrow} G_3 \ldots, roles_t)$ \emph{related to a tainted node $s$} such that $t : G_0 \ldots \stackrel{(m_{j+1},r_{j+1})}{\Longrightarrow} G_{j+1} \stackrel{r_{j+2}}{\Rightarrow} \ldots G_i \stackrel{(m_{i+1},r_{i+1})}{\Longrightarrow} G_{i+1} \ldots)$ with $G_i \stackrel{(m_{i+1},r_{i+1})}{\Longrightarrow} G_{i+1}$ violating the policy oracle and $r_{j+1} = r_{src}$ and $r_{i+1} = r_{sink}$. If all intermediate rule applications between $G_j \stackrel{(m_{j+1},r_{j+1})}{\Longrightarrow} G_{j+1}$ and $G_i \stackrel{(m_{i+1},r_{i+1})}{\Longrightarrow} G_{i+1}$ in $t$ are of rules other than $r_{src}$ and $r_{sink}$, then we have a BAC vulnerability containing an \emph{atomic tainted flow instance}.
\end{defi}

A security vulnerability detection technique is \emph{sound} for a category of vulnerabilities if it can correctly conclude that a program has no vulnerabilities of that category. In the following theorem, we argue that a static analysis technique computing the tainted information flow is sound\ie it detects all \emph{direct} potential BAC vulnerabilities related to tainted nodes.%
\begin{thm}[soundness for direct vulnerabilities]\label{thm:direct}
	Given a tainted type graph $TG_t = (TG,T)$, a tainted Graph API $(TG_t,\mathcal{R},src,sink)$, and a tainted information flow  $f = \cup \{DR(r_{src},r_{sink})| r_{src} \in src(s),  r_{sink} \in sink(s), s\in T\}$.
	Given a \emph{direct BAC vulnerability} $(t : G_0 \stackrel{r_1}{\Rightarrow} G_1 \stackrel{r_2}{\Rightarrow} G_2 \stackrel{r_3}{\Rightarrow} G_3 \ldots, roles_t)$ related to a tainted node $s$ from $T$, then
	$DR(r_{src},r_{sink})$ in $f$ is non-empty for some $r_{src}$ in $src(s)$ and $r_{sink}$ in $sink(s)$.
\end{thm}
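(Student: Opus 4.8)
The plan is to reduce the statement about a direct BAC vulnerability to an existence claim about a dependency reason, by invoking the completeness of the static dependency analysis. By \autoref{thm:vulnerability-taint}, the given BAC vulnerability entails a tainted flow instance: a source application via some $r_{src} \in src(s)$ creating a node $n$ of type $s$, followed at some point by a sink application via $r_{sink} \in sink(s)$ using $n$. Since the vulnerability is \emph{direct} (\autoref{def:taint}), the creating call $G_{i-1} \stackrel{(m_i,r_i)}{\Longrightarrow} G_i$ occurs \emph{immediately} before the violating sink call $c: G_i \stackrel{(m_{i+1},r_{i+1})}{\Longrightarrow} G_{i+1}$. So I have an adjacent transformation pair $(t_1,t_2) = (G_{i-1} \stackrel{r_{src}}{\Longrightarrow} G_i, G_i \stackrel{r_{sink}}{\Longrightarrow} G_{i+1})$ with $r_{src} = r_i$ and $r_{sink} = r_{i+1}$.

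\medskip

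\emph{Next I would} argue that this adjacent pair is in produce-use dependency in the sense of \autoref{def:dependency}. The key observation is that the node $n$ used by the match $m_{i+1}$ of the sink rule is \emph{created} by $r_{src}$ in the immediately preceding step: $n$ lies in the image of the co-match $m'_i$ restricted to $R_{src} \setminus K_{src}$, hence in $m'_i(C_{src})$ (the creation graph). The preimage of $n$ in $L_{sink}$ has type $s$ (by \autoref{def:tainted-graph-API}, as used already in the proof of \autoref{thm:vulnerability-taint}), so $n$ is genuinely part of the match of $r_{sink}$. Because $n$ was not present in the graph $G_{i-1}$ before the source application, there can be no morphism $x_{21}: L_{sink} \rightarrow D_i$ with $h_i \circ x_{21} = m_{i+1}$: such an $x_{21}$ would have to account for $n$ inside $D_i = G_i \setminus m'_i(R_{src}\setminus K_{src})$, but $n$ was removed from (equivalently, is not yet in) that intermediate graph. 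This is exactly the negated condition in \autoref{def:dependency}, so $(t_1,t_2)$ is in produce-use dependency.

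\medskip

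\emph{Finally}, I would invoke completeness of the dependency analysis. The excerpt states that each transformation pair in produce-use dependency comes with a \emph{unique} dependency reason (completeness), and that the coarse-granular static dependency analysis computes all minimal dependency reasons for any rule pair admitting such a dependency. Since $(t_1,t_2)$ witnesses a produce-use dependency for the rule pair $(r_{src},r_{sink})$, the set $DR(r_{src},r_{sink})$ is non-empty. As $r_{src} \in src(s)$ and $r_{sink} \in sink(s)$, this set is one of the unions comprising $f$ in \autoref{def:tainted-information-flow}, which gives the claim.

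\medskip

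\emph{The main obstacle} I anticipate is the second step: rigorously establishing that the adjacency of the two calls forces the produce-use dependency, i.e., that no alternative match $x_{21}$ into the intermediate object $D_i$ exists. One must be careful that ``$n$ is created and then used'' indeed contradicts the existence of $x_{21}$; this hinges on tracking $n$ through the double-pushout diagram and using that $n \in m'_i(C_{src})$ cannot lie in the image of $d_i: K_{src} \to D_i$. A subtle point is that the theorem only asserts \emph{some} $r_{src}, r_{sink}$ work, not that the originally fixed rules do --- but since the direct case pins down the adjacent creating rule as a concrete source rule, the witnessing pair is exactly the one just constructed, so no further quantifier juggling is needed.
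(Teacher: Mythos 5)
Your proof is correct and follows essentially the same route as the paper's: reduce to an adjacent source--sink transformation pair using \autoref{thm:vulnerability-taint} together with directness, then invoke completeness of dependency reasons to conclude that $DR(r_{src},r_{sink})$ is non-empty. The only difference is that you explicitly verify that the adjacent pair is in produce-use dependency (the non-existence of $x_{21}$, since the created node $n$ cannot lie in the image of the intermediate graph), a step the paper leaves implicit in its appeal to completeness.
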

\begin{proof}
	Based on \autoref{thm:vulnerability-taint} and given that $t$ is a direct vulnerability (\autoref{def:taint}) we can rewrite $t$ into $G_0 \ldots \stackrel{(m_{i},r_{i})}{\Longrightarrow} G_{i} \stackrel{(m_{i+1},r_{i+1})}{\Longrightarrow} G_{i+1} \ldots$ with $r_{i} = r_{src} \in src(s)$ and $r_{i+1} = r_{sink} \in sink(s)$.
	As dependency reasons are complete~\cite{Lambers_2019_Granularityconflictsdependenciesgraphtransformationsystemstwodimensionalapproach}, there exists a dependency reason $d$ in $DR(r_{src},r_{sink})$ such that a node $m$ of type $s$ (instantiated to node $n$ in $t$) is created by $r_{src}$ and used by $r_{sink}$ as described in $d$.
\end{proof}

Our detection technique is not sound for \emph{indirect vulnerabilities}\ie situations where a tainted information flow occurs because of applications of intermediate rules between a source rule and a sink rule. Unsoundness may occur for two reasons---of which we give examples in \autoref{exa:indirect-delete} and \autoref{exa:indirect-create}. First, the source rule may create a tainted node together with some incident graph structure; this incident graph structure needs to be deleted by an intermediate rule in order for the sink rule that deletes the tainted node to become applicable. The second reason is that an intermediate rule may create an incident graph structure to the tainted node thereby enabling the application of a sink rule requiring this incident graph structure.
\begin{exa}[potential indirect vulnerability due to incident structure deleted]\label{exa:indirect-delete}
	The toy example in \autoref{fig:deleteIncident} illustrates a potential indirect vulnerability, since the rule \emph{deleteT} deletes the tainted node $T$ created by the rule \emph{createIncidentT}.
	\begin{figure}
		\begin{subfigure}[t]{\textwidth}
			\centering
			\includegraphics[width=.55\textwidth]{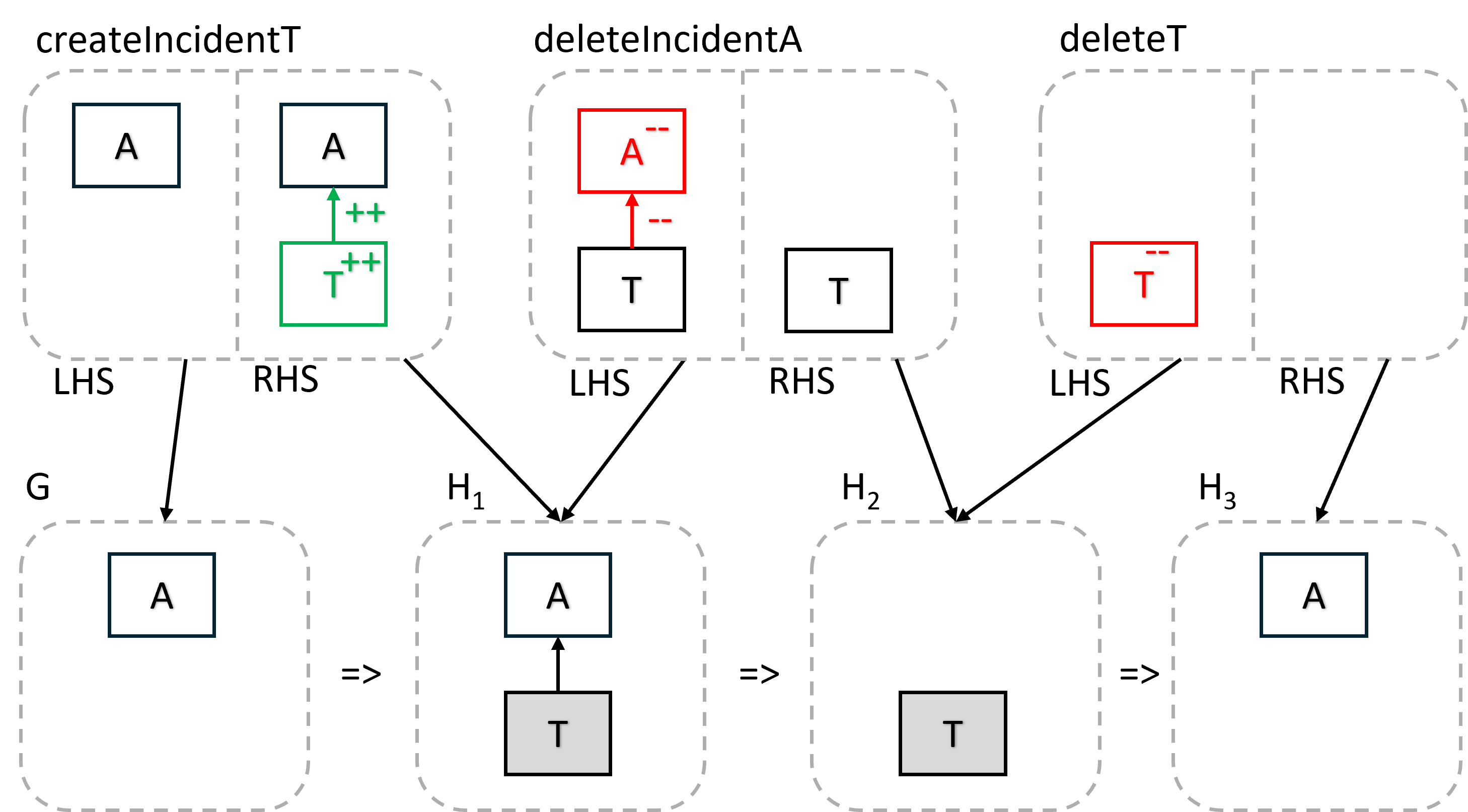}
			\caption{Potential indirect vulnerability with forbidden incident structure deleted first.}
			\label{fig:deleteIncident}
		\end{subfigure}
		\begin{subfigure}[t]{\textwidth}
			\centering
			\includegraphics[width=.45\textwidth]{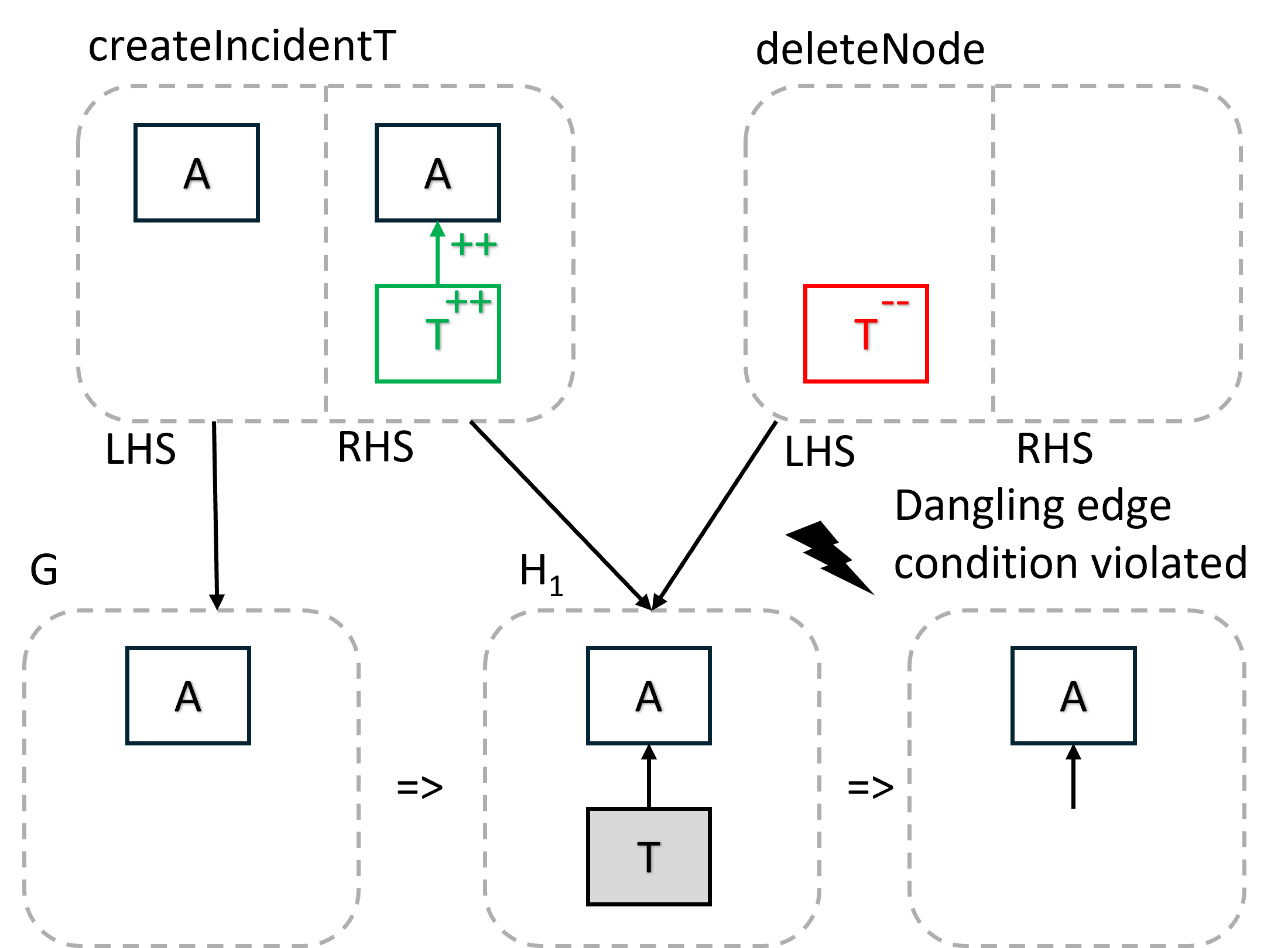}
			\caption{No potential direct vulnerability because of dangling edge to node $A$.}
			\label{fig:deleteIncidentDangling}
		\end{subfigure}
		\caption{\label{fig:deleteIncidentWhole}Illustrations for \autoref{exa:indirect-delete}.}
	\end{figure}
	The rule \emph{deleteT} is enabled by the application of the intermediate rule \emph{deleteIncidentA}. As the sink rule \emph{deleteT} is not directly applicable after the source \emph{createIncidentT} (as this would create a dangling edge incident to node $A$---see \autoref{fig:deleteIncidentDangling}), there is no critical pair\ie dependency reason, between the two rules.
\end{exa}

\begin{exa}[Potential indirect vulnerability due to incident structure created]\label{exa:indirect-create}
	The toy example in \autoref{fig:createIncident} illustrates a potential indirect vulnerability, since the rule \emph{createIncidentB+} uses the tainted node $T$ created by rule \emph{createIncidentT}.
	The application of \emph{createIncidentB+} is enabled by the intermediate application of \emph{createIncidentB}. As the sink rule \emph{createIncidentB+} is not directly applicable after the application of the source rule \emph{createIncidentT} (as this would create a dangling edge incident to node $A$---see \autoref{fig:createIncidentDangling}), there is no critical pair\ie dependency reason, between the two rules.

	\begin{figure}
		\begin{subfigure}[t]{\textwidth}
			\centering
			\includegraphics[width=.55\textwidth]{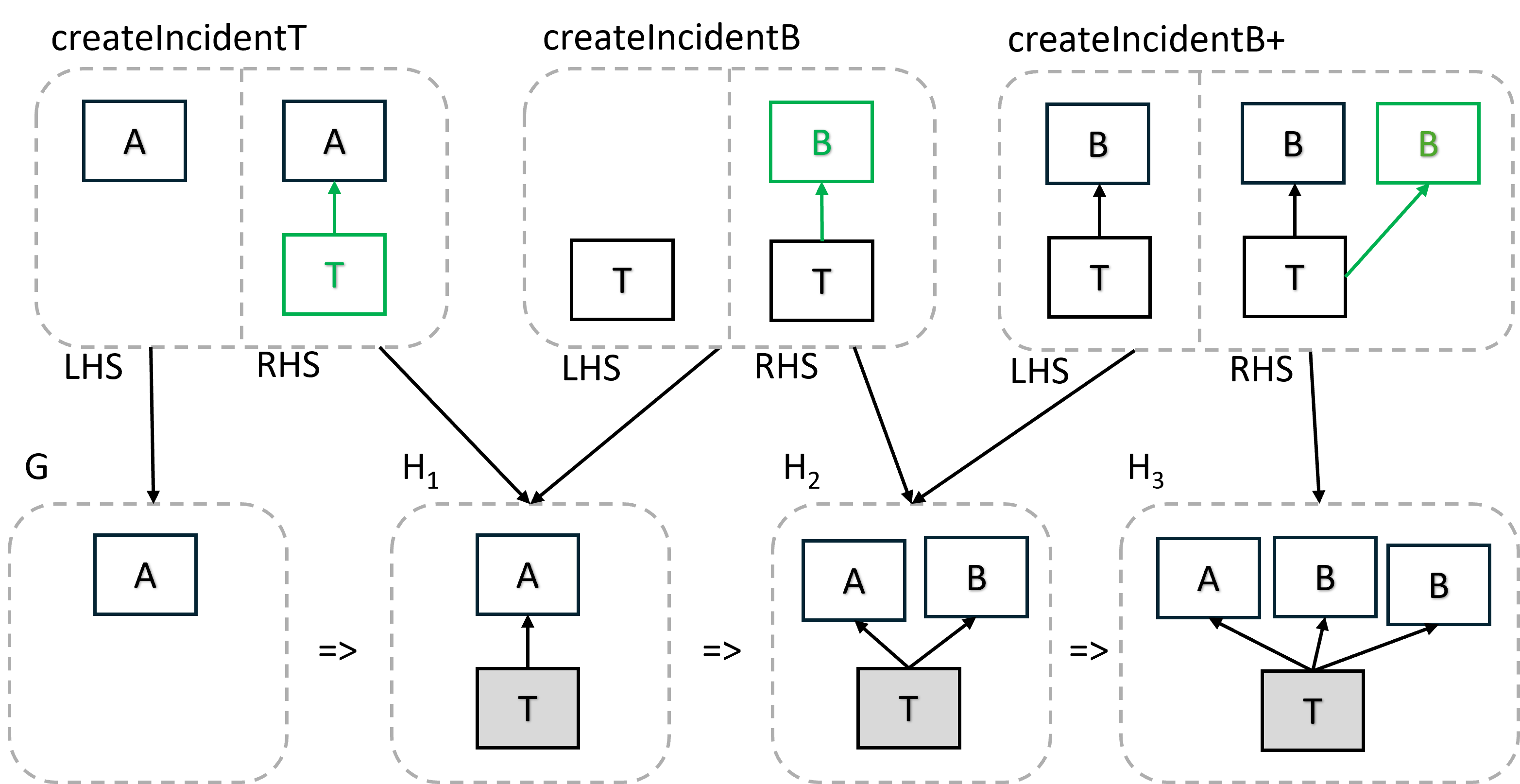}
			\caption{Potential indirect vulnerability with required incident structure created first.}
			\label{fig:createIncident}
		\end{subfigure}
		\begin{subfigure}[t]{\textwidth}
			\centering
			\includegraphics[width=.5\textwidth]{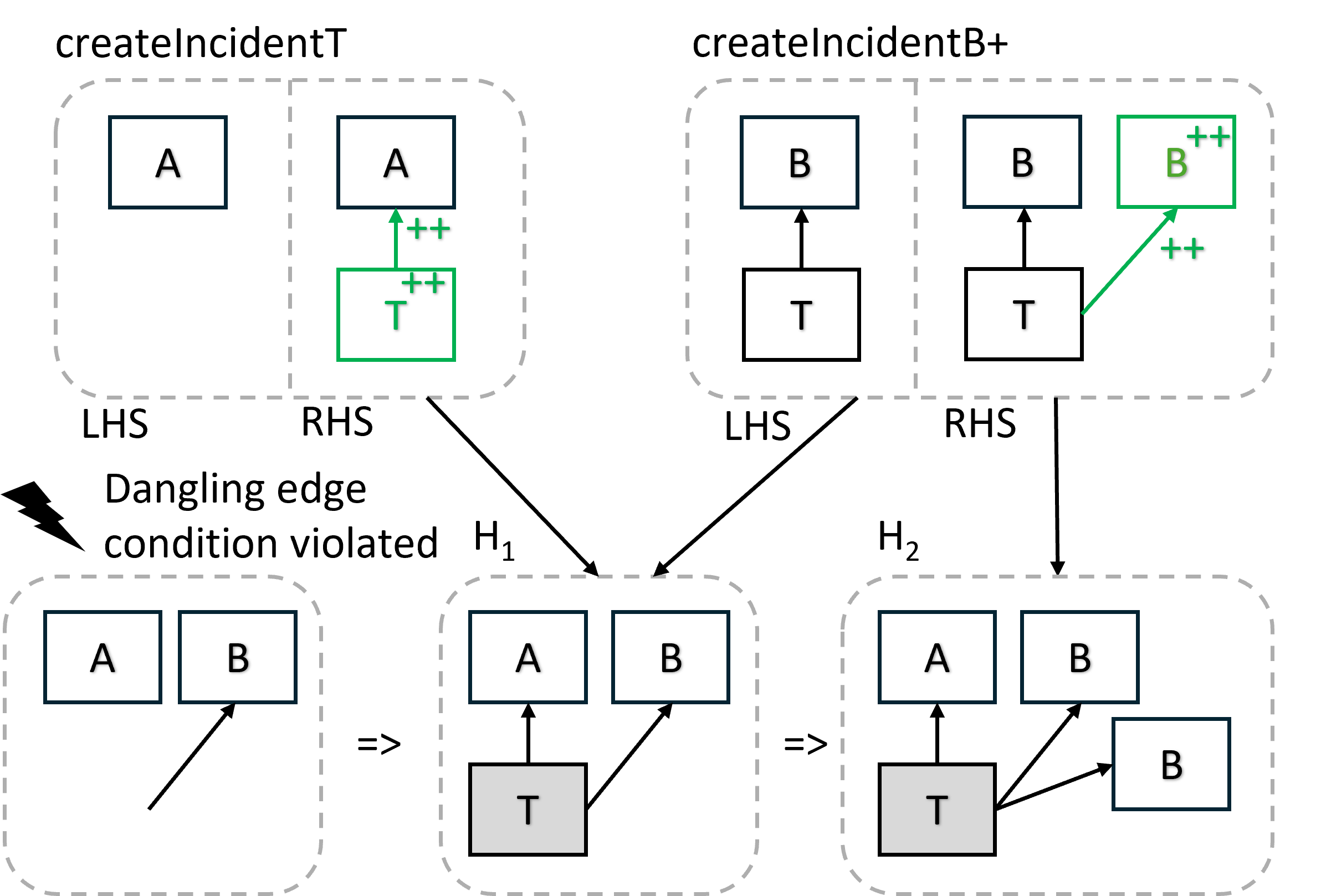}
			\caption{No potential direct vulnerability because of dangling edge to node $B$.}
			\label{fig:createIncidentDangling}
		\end{subfigure}
		\caption{\label{fig:createIncidentWhole}Illustrations for \autoref{exa:indirect-create}.}
	\end{figure}
\end{exa}
The examples above illustrate that critical pairs (or dependency reasons) are effective at illustrating direct dependencies only. As soon as dependencies arise because of the application of more than one rule these compact dependency representations are not complete anymore.

The following theorem proves that under certain conditions our analysis is still sound w.r.t. finding \emph{both direct and indirect vulnerabilities with an atomic flow instance}. This is achieved by \emph{reducing} indirect vulnerabilities to direct ones, provided that these two conditions are met: we can shift all intermediate rule applications between a source and sink rule application to some application before the source or after the sink rule application; and, moreover, the access control policy oracle is stable under shift---see \autoref{def:stable-under-shift}.
A static check of whether shifting of rules can be achieved by checking the sequential independence of pairs of rules---see \autoref{def:stable-under-shift}; if a rule pair is sequentially independent, then the application of both rules would have the same result regardless of the order; thus, the transformations can be switched. Sequential independence can be computed by the binary level of the \textit{multi-granular static dependency analysis}~\cite{Born_2017_GranularityConflictsDependenciesGraphTransformationSystems,Lambers_2018_InitialConflictsDependenciesCriticalPairsRevisited}.

\begin{thm}[soundness for vulnerabilities]\label{thm:general}
	Given a tainted Graph API $(TG_t,\mathcal{R},src,sink)$ with $TG_t = (TG,T)$ and tainted information flow  $f = \cup \{DR(r_{src},r_{sink})| r_{src} \in src(s),  r_{sink} \in sink(s), s\in T\}$
	such that the following holds for each $(r_{src},r_{sink})$:
	(1) for each rule $r$ from $\mathcal{R} \setminus \{r_{sink}\}$ it holds that the rule pair $(r_{src},r)$ is sequentially independent
	or
	(2) for each rule $r$ from $\mathcal{R}\setminus \{r_{src}\}$ it holds that the rule pair $(r, r_{sink})$ is sequentially independent.

	Given a policy oracle $ACP = (RO, \leq_{RO}, P)$ and policy implementation $P^I$ which are stable under shift and a \emph{BAC vulnerability with atomic flow instance} represented by $(t : G_0 \stackrel{r_1}{\Rightarrow} G_1 \stackrel{r_2}{\Rightarrow} G_2 \stackrel{r_3}{\Rightarrow} G_3 \ldots, roles_t)$ related to a tainted node $s$ from $T$, then
	$DR(r_{src},r_{sink})$ in $f$ is non-empty for some $r_{src}$ in $src(s)$ and $r_{sink}$ in $sink(s)$.
\end{thm}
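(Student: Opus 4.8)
The plan is to \emph{reduce} the given vulnerability with an atomic flow instance to a \emph{direct} one and then invoke \autoref{thm:direct}. First I would apply \autoref{thm:vulnerability-taint} to extract from $t$ a tainted flow instance: a source application $G_j \stackrel{r_{src}}{\Rightarrow} G_{j+1}$ creating a node $n$ with $type_V(n)=s$, followed later by the policy-violating sink application $G_i \stackrel{r_{sink}}{\Rightarrow} G_{i+1}$ using $n$. By the atomicity hypothesis (\autoref{def:atomic-flow}), every intermediate rule $r_{j+2},\ldots,r_i$ lies in $\mathcal{R}\setminus\{r_{src},r_{sink}\}$; this is exactly what makes the two shifting conditions applicable without the source or sink reappearing in the middle.

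The heart of the argument is to bring the source and sink next to each other. Under condition (1), each intermediate rule $r$ lies in $\mathcal{R}\setminus\{r_{sink}\}$, so the pair $(r_{src},r)$ is sequentially independent; I would repeatedly swap the source application rightward past every intermediate application. Since swapping a sequentially independent pair preserves the end result (\autoref{def:stable-under-shift}), the graph $G_i$ immediately preceding the sink is left unchanged, so the sink call---together with its match on $n$ and its role---is literally identical to the original, hence still violates the policy, while the comatch of $r_{src}$ still produces the very node $n\in G_i$ that the sink uses. Under condition (2), symmetrically, each intermediate rule lies in $\mathcal{R}\setminus\{r_{src}\}$, so $(r,r_{sink})$ is sequentially independent and I would move the sink application leftward past every intermediate application until it sits directly after the source. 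Here the source call $G_j \stackrel{r_{src}}{\Rightarrow} G_{j+1}$ is untouched, so it still creates $n$; moreover each leftward swap is between two sequentially independent calls, so by stability under shift the policy verdict $P=\mathit{false}$ of the sink call is preserved, while sequential independence keeps its match anchored on $n$.

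In either case I obtain a new role-based execution $t'$ in which $r_{src}$ is applied directly before $r_{sink}$, the source creates the node $n$ of type $s$, the sink uses $n$ and violates the policy, and $n\notin G_0$ (since $n$ is created after $G_0$). By \autoref{def:taint} this is precisely a \emph{direct} BAC vulnerability related to $s$ for $r_{src}\in src(s)$ and $r_{sink}\in sink(s)$, so \autoref{thm:direct} immediately yields that $DR(r_{src},r_{sink})$ in $f$ is non-empty, as required.

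I expect the main obstacle to be the bookkeeping that the created and used node $n$ is genuinely preserved across the whole chain of swaps---i.e.\ that the comatch of $r_{src}$ and the match of $r_{sink}$ track the same node of $G_i$ (resp.\ $G_{j+1}$) before and after shifting. The excerpt's notion of sequential independence is stated only as preservation of the end graph, so here I would lean on the precise theory of sequential independence, which also supplies the induced correspondence of co/matches, together with stability under shift to carry the verdict $P=\mathit{false}$ through the reshuffling in case (2).
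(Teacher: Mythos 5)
Your proposal is correct and follows essentially the same route as the paper's proof: it reduces the (possibly indirect) atomic-flow vulnerability to a direct one by repeatedly swapping sequentially independent applications---the source rightward under condition (1), the sink leftward under condition (2)---using stability under shift of the policy, and then invokes \autoref{thm:direct}. If anything, you are slightly more careful than the paper about where stability under shift is actually needed (the paper invokes it in case (1) as well, where you correctly note the sink call is literally unchanged) and about the bookkeeping of the created node $n$ across swaps, which the paper also glosses over.
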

\begin{proof}
	Assume the BAC vulnerability $(t : G_0 \stackrel{r_1}{\Rightarrow} G_1 \stackrel{r_2}{\Rightarrow} G_2 \stackrel{r_3}{\Rightarrow} G_3 \ldots, roles_t)$ related to a tainted node $s$ from $T$.
	If the BAC vulnerability is direct, then, based on \autoref{thm:direct}, it follows that $DR(r_{src},r_{sink})$ in $f$ is non-empty for some $r_{src}$ in $src(s)$ and $r_{sink}$ in $sink(s)$.

	Let the BAC vulnerability be indirect. Then, according to \autoref{thm:vulnerability-taint}, we can rewrite $t$ into $G_0 \ldots \stackrel{(m_{j+1},r_{j+1})}{\Longrightarrow} G_{j+1} \stackrel{r_{j+2}}{\Rightarrow} \ldots G_i \stackrel{(m_{i+1},r_{i+1})}{\Longrightarrow} G_{i+1} \ldots$ with $r_{j+1} = r_{src} \in src(s)$ and $r_{i+1} = r_{sink} \in sink(s)$.
	Now assume that condition (1) holds, then we know that for each rule $r$ from $\mathcal{R} \setminus \{r_{j+1}\}$ it holds that $(r_{j+1},r)$ is sequentially independent. Because we have a vulnerability with an atomic flow instance, we know that all intermediate rule applications are applications of some rule $r$ from $\mathcal{R} \setminus \{r_{j+1},r_{i+1}\}$.
	We can thus obtain from $t$ a transformation
	$t': G_0 \ldots G \stackrel{r_{j+1}}{\Longrightarrow} G' \stackrel{r_{i+1}}{\Longrightarrow} G'' \ldots)$ with $roles_{t'})$.
	This is because we can stepwise switch toward the front each intermediate rule application with the application of the source rule $r_{j+1}$. Since the policy is stable under shift, from \autoref{thm:direct} it follows now for $(t', roles_{t'})$ that $DR(r_{src},r_{sink})$ in $f$ is non-empty for some $r_{src}$ in $src(s)$ and $r_{sink}$ in $sink(s)$.

	If condition (2) holds, we argue similarly, but instead switch each intermediate rule application of a rule other than the source rule toward the back.
\end{proof}

As illustrated by \autoref{exa:indirect-delete} and \autoref{exa:indirect-create}, the conditions from \autoref{thm:general} are not met in all cases. Moreover, indirect vulnerabilities can also occur because of the intermediate rules being themselves source and sink rules. The development of a new, sound static detection technique for all indirect dependencies constitute part of our future plans; such a technique requires an extension of the theory of critical pairs.

The following example shows that the conditions from \autoref{thm:general} applies to a part of our running example, illustrating how indirect vulnerabilities can be reduced to direct ones.
\begin{exa}[potential indirect vulnerability reducible to potential direct vulnerability]\label{exa:reducible-vulnerability}
\autoref{fig:reducibleVulnerability} illustrates a potential indirect vulnerability: a user creates a repository that is updated after the user creates a project.
This potential indirect vulnerability can be reduced to a direct one, since it is possible to switch the intermediate application of \emph{createProject} with the application of \emph{updateRepo}---see \autoref{fig:reducedVulnerability}.

\autoref{fig:irreducibleVulnerability} illustrates another potential indirect vulnerability: a user creates a repository for which an issue is created and afterward this issue is deleted again.  We cannot switch the intermediate rule application \emph{createIssue} forward or backward in this sequence. This example illustrates the need for a new, sound static detection technique for all indirect dependencies.
\end{exa}

\begin{figure}
	\begin{subfigure}[t]{\textwidth}
		\centering
		\includegraphics[width=.55\textwidth]{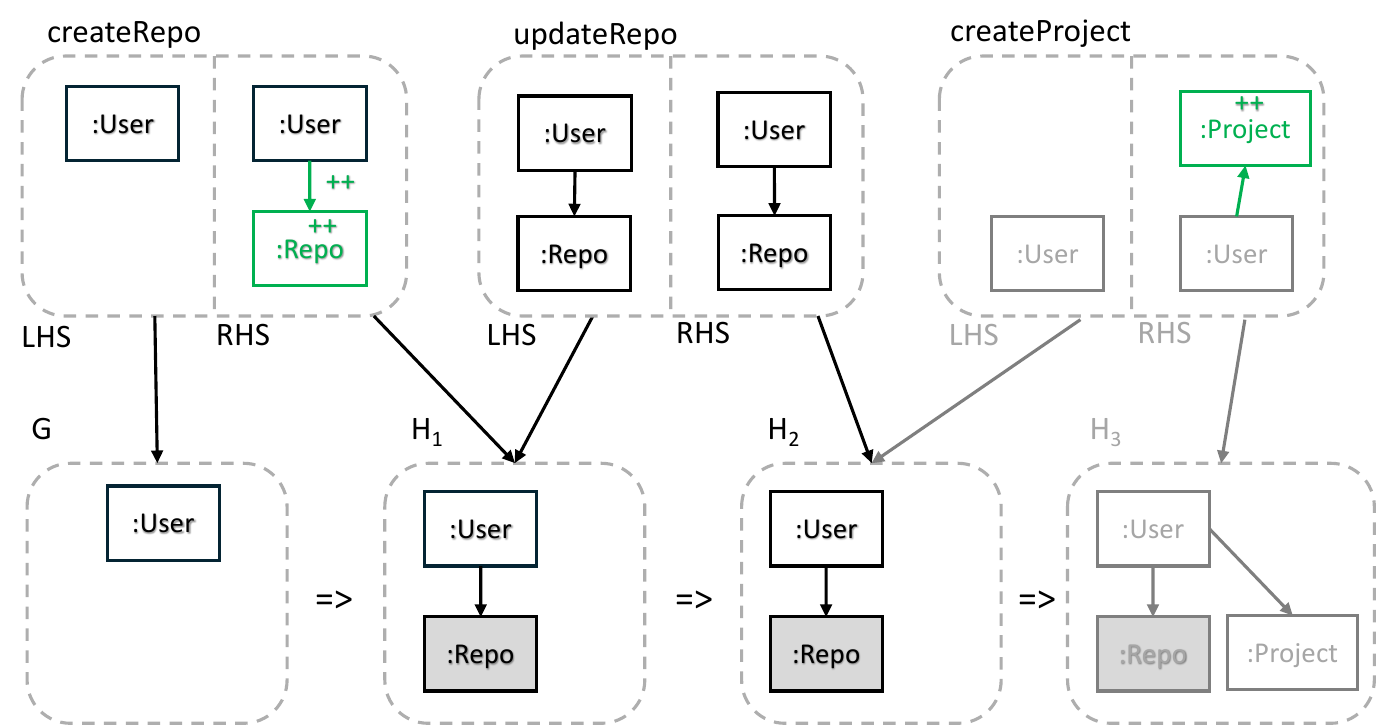}
		\caption{Potential indirect vulnerability reduced to potential direct vulnerability.}
		\label{fig:reducedVulnerability}
	\end{subfigure}
	\begin{subfigure}[t]{\textwidth}
		\centering
		\includegraphics[width=.55\textwidth]{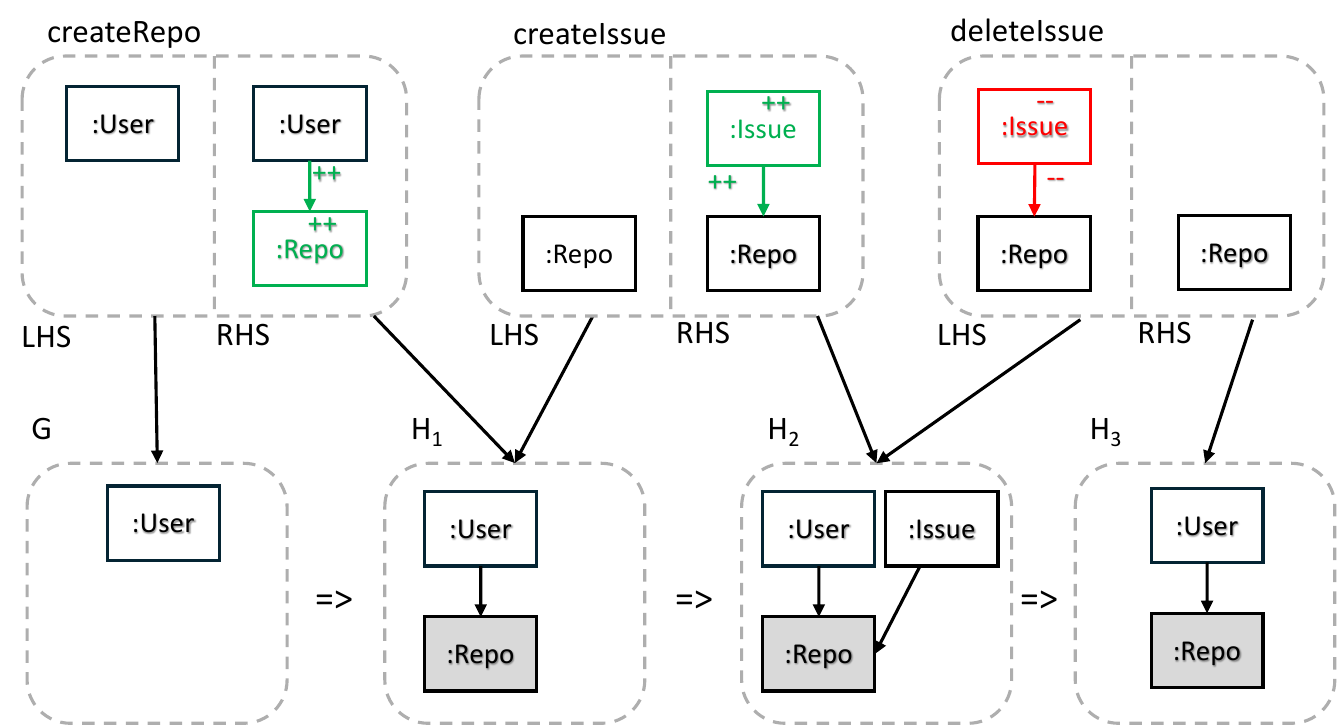}
		\caption{Potential indirect vulnerability.}
		\label{fig:irreducibleVulnerability}
	\end{subfigure}
	\caption{\label{fig:reducedVulnerabilityWhole}Illustrations for \autoref{exa:reducible-vulnerability}.}
\end{figure}

Effectively, each dependency reason in the tainted information flow constitutes a potential vulnerability that calls for further investigation.
But the tainted information flow may still contain \emph{false positives}, i.e. issues that do not turn out to be actual vulnerabilities.
Therefore, a manual review of the tainted information flow by the security analyst (step 5 in \autoref{fig:overview}) as described in the following is necessary.

The analyst reviews each dependency reason between source and sink rules contained in the tainted information flow against the access control policy of the API.
The review has one of the following two outcomes.
First, if the policy does not properly cover the tainted flow expressed by the dependency reason, it represents an \emph{unsecured tainted flow}; the policy thus needs to be adapted such that proper access control for this flow can be implemented; in this case, the dynamic analysis (see \autoref{subsec:analysis_dynamic}) can be used to prove that the dependency reason leads to a true positive, i.e., a BAC vulnerability, by demonstrating the occurrence of unsecured access to sensitive data.
Alternatively, the dependency reason represents a \emph{secured tainted flow}; in this case, the dynamic analysis can be used to verify the correct implementation of the policy.

We formalize these two alternative outcomes below.

\begin{defi}[secured and unsecured tainted flow]\label{def:review}
	Given a tainted type graph $TG_t = (TG, T)$, a tainted Graph API $(TG_t,\mathcal{R},src,sink)$, a tainted flow
	$f = \cup \{DR(r_{src},r_{sink})| r_{src} \in src(s), r_{sink} \in sink(s), s\in T\}$
	and an access control policy oracle $ACP = (RO, \leq_{RO}, P)$,
	then $f$ can be divided in two subsets:
	\begin{itemize}
		\item
		$S_{f}$ is the set of \emph{secured tainted flows} and consists of all dependency reasons in $f$ for which the security analyst has confirmed the correctness of the policy description based on their manual review.
		\item
		$U_{f} = f \setminus S_f$ the set of \emph{unsecured tainted flows} consisting of all dependency reasons for which the security analyst did not confirm the correctness of the policy description based on their manual review.
	\end{itemize}
\end{defi}
\noindent %FIXED indent
Recall the policy requirement stated in \autoref{subsec:preliminaries_access-control} based on the running example: \emph{a user may only update the information of a repository, if that user is the owner of the repository}; additionally, recall the tainted flow of the running example containing the pair \emph{(createRepo, updateRepo)}. Following a manual review, the security analyst can identify that the pair is covered by the policy. Therefore, in this case, $f = S_f = \{\emph{(createRepo, updateRepo)}\}$.
\subsection{Dynamic analysis}\label{subsec:analysis_dynamic}
The static analysis detects all potential BAC vulnerabilities, which may contain false positives, i.e., dependencies which do not constitute vulnerabilities. Therefore, the static analysis is not \emph{complete}; to achieve completeness, we complement the static analysis with the dynamic analysis. The dynamic analysis conducts actual API executions which validate potential BAC vulnerabilities detected by the static analysis; the dynamic analysis is complete, in that, all vulnerabilities it detects are indeed BAC instances. We describe the dynamic analysis segment of our approach below.

The dynamic analysis relies on the secured flows $S_f$ and unsecured flows $U_f$ identified during the static analysis---step 5 in \autoref{fig:overview} (left) and \autoref{def:review}. Drawing on $S_f$ and $U_f$, the security analyst defines test cases (step 6 in \autoref{fig:overview}) which:
\begin{itemize}
	\item If a dependency reason belongs to $S_f$, test whether the policy has been implemented correctly for this dependency reason.
	\item If a dependency reason belongs to $U_f$, expose the BAC vulnerability stemming from this dependency reason or discover an implicit undocumented implementation of the policy.
\end{itemize}
\noindent %FIXED indent
We refer to these test cases as \emph{taint tests}, and formally define them below. We distinguish positive taint tests from negative taint tests. For negative tests we expect a BAC exception to be raised, and for positive tests we expect no exception to be raised. If this is not the case, we have detected a BAC vulnerability.
\begin{defi}[taint test]\label{def:taint-test}
	Given a tainted Graph API $(TG_t,\mathcal{R},src,sink)$, with \emph{tainted flow}
	$f = \cup \{DR(r_{src},r_{sink})| r_{src} \in src(s), r_{sink} \in sink(s), s\in T\}$
	and access control policy oracle $ACP = (RO, \leq_{RO}, P)$,
	a \emph{taint test} $((t, roles_t), access)$
	consists of a role-based Graph API execution $(t, roles_t)$ and an expected outcome represented by the boolean variable $access \in \{true,false\}$.

	A taint test $((t, roles_t),access)$ is \emph{positive} (\emph{negative}) in case $access$ equals true (false).
\end{defi}
\noindent %FIXED indent
For a systematic method to the definition of test cases, we define the \emph{flow coverage}.
This coverage criterion helps with ascertaining whether taint tests covering the tainted flow are available. Flow coverage draws from similar approaches for covering rule dependencies~\cite{RungeKH13,HildebrandtLG13} of testing.\footnote{We focus in this first approach on \emph{covering dependency reasons} for pairs of sink and source rules, instead of covering dependencies between all rule pairs, or even dependency paths in a dependency graph.}
\begin{defi}[flow coverage]\label{def:flow-coverage}
	Given a tainted Graph API $(TG_t,\mathcal{R},src,sink)$, with \emph{tainted flow} $f = \cup \{DR(r_{src},r_{sink})| r_{src} \in src(s), r_{sink} \in sink(s), s\in T\}$
	and access control policy oracle $ACP = (RO, \leq_{RO}, P)$. A taint test $((t, roles_t),  access)$ \emph{covers} a dependency reason $d$ in $DR(r_{src},r_{sink})$ if for $(t : G_0 \stackrel{r_1}{\Rightarrow} G_1 \stackrel{r_2}{\Rightarrow} G_2 \stackrel{r_3}{\Rightarrow} G_3 \ldots, roles_t)$ there exists some $i,j > 0$ with $i<j$ such that $r_i = r_{src}$ and $r_j = {r_{sink}}$, and $t$ creates and uses elements as described in the dependency reason $d$.

	A set of taint tests $T$ satisfies the \emph{(unsecured, or secured) flow coverage} if for each dependency reason in $f$ ($U_f$, or $S_f$, resp.) there exists a positive and negative test in $T$.
\end{defi}

We call a taint test covering a dependency reason \emph{minimal}, when it consists of the source rule application directly followed by the sink rule application.

In order to make sure that we test with diverse roles from the access control policy, we formulate an additional coverage criterion. This criterion ensures that, for each role in the access control policy, there exists a positive and negative test such that: (i) both tests include two different API calls (ii) for positive tests, the second call is executed with a more privileged role than the first call or a role with the same privileges; for negative tests;  the second call is executed with a less privileged role than the first call.

\begin{defi}[role coverage]\label{def:role-coverage}
	Given a tainted Graph API $(TG_t,\mathcal{R},src,sink)$, with \emph{tainted information flow} $f = \cup \{DR(r_{src},r_{sink})| r_{src} \in src(s), r_{sink} \in sink(s), s\in T\}$ and access control policy oracle $ACP = (RO, \leq_{RO}, P)$.
	A taint test $((t, roles_t),  access)$ \emph{covers} a pair $(ro,ro')$ from $RO \times RO$ if for $(t : G_0 \stackrel{r_1}{\Rightarrow} G_1 \stackrel{r_2}{\Rightarrow} G_2 \stackrel{r_3}{\Rightarrow} G_3 \ldots, roles_t)$ there exists some $i,j \geq 0$ with $i < j$ such that $ro \in roles_t(i)$ and $ro' \in roles_t(j)$.

	Let $ro,ro'$ be elements from $RO$. A set of taint tests $T$ satisfies \emph{role coverage} if for each role $ro$ in $RO$ (1) there exists a positive test in $T$ covering some $(ro,ro')$ from $\leq_{RO}$ and (2) there exists a negative test in $T$ covering $(ro,ro')$ with $(ro,ro')$ from $>_{RO}$ except if $ro$ is the least privileged role in $ACP$.
\end{defi}

Based on the flow and role coverage, a security analyst is in position to generate a set of concrete test cases for a \gapi implementation. Their execution (step 7 in \autoref{fig:overview}) comprises the execution of API calls, as prescribed by the graph transformation sequence captured in a taint test. We have realized steps 6 and 7 by capturing these concrete test cases as unit-tests in a script which evaluates tests automatically---see \autoref{sec:graphql}.
During evaluation, we observe if an expected exception is included in the API response\eg in GitHub such an exception indicates that the call is attempting to perform an action for which the user executing the call has insufficient privileges.

Regarding the test case evaluation results and their correspondence to taint tests, we distinguish the following cases:
\begin{itemize}
	\item
	A \emph{positive} taint test $((t, roles_t),true)$ is \emph{successful} if its execution does \emph{not} lead to a BAC exception.
	\item A \emph{positive} taint test $((t, roles_t),true)$ \emph{fails} if its execution leads to a BAC exception.
	\item A \emph{negative} taint test $((t, roles_t),false)$ is \emph{successful} if its execution leads to a BAC exception.
	\item A \emph{negative} taint test $((t, roles_t),false)$ \emph{fails} if its execution does \emph{not} lead to a BAC exception.
\end{itemize}
\noindent %FIXED indent
The dynamic analysis is \emph{complete}, in that, if a positive (or negative) taint test fails, then a BAC vulnerability has been detected. Depending on the context in which the security analysis of the \gapi is performed, if the dynamic analysis focuses on flows from $S_f$, the security analyst should expect to obtain successful positive and negative tests---a test evaluation that would not meet these expectations would indicate an incorrect policy implementation; if the dynamic analysis focuses on flows in $U_f$, the security analyst should expect to obtain failing positive or negative tests.

As an example, recall the (secured) tainted flow obtained for the running example: $f=\{$\emph{(createRepo, updateRepo)}$\}$. A security analyst could satisfy the secured flow coverage by performing a positive and a negative test. The positive test should entail the execution of the query \emph{createRepo} (see \autoref{fig:schema}) as a user with an owner role and subsequently the execution of \textit{updateRepo} as a user with the same role. The negative test should entail the execution of \emph{createRepo} as a user with an owner role and subsequently the execution of \textit{updateRepo} as a user with the collaborator role\ie a user who is not an owner. As in this case the flow is secured, the expected outcome is that the positive test should succeed\ie the test should lead to no BAC exception and the repository information should be updated; the negative test should also succeed\ie the test should lead to a BAC exception and the repository information should not be updated. Assume a hypothetical scenario in which the security analyst concludes that the tainted flow $f=\{$\emph{(createRepo, updateRepo)}$\}$ is not secured properly and is thus part of the unsecured tainted flow. This may be the case because no information on securing the call \emph{updateRepo} is found in the policy description. Then, the expected outcome for the negative test would be that it fails\ie does not lead to a BAC exception, because indeed no proper access control for this case is implemented. On the other hand, this may be the case if the policy appears to be too severe and thus the expected outcome for the positive test would be that it fails\ie does lead to a BAC exception, although it should not.

The running example considers two roles: the owner and the collaborator. Regarding the role coverage, the tests described for the flow coverage above, cover the owner role: they include (i) a positive test with two calls executed by a user with an owner role (ii) a negative test with two calls, the first one executed by a user with an owner role and the second one executed by a less privileged user\ie the user with a collaborator role. To satisfy the role coverage, the security analyst could add one more test: a positive test with two calls executed by a user with a collaborator role. The reason the positive test suffices is that the collaborator is the least privileged role, so no negative test using this role can be defined.
\section{Application of the Taint Analysis to GitHub}\label{sec:graphql}
We applied the presented taint analysis to the GitHub API. In particular, we focused on the free version of GitHub and repositories that are owned by personal accounts.

We performed the taint analysis for two examples: the first application of the analysis builds on an \emph{extended version of the running example} (see \autoref{subsec:preliminaries_graphAPIs}), and aims to test a basic part of the access permissions of personal accounts~\cite{GitHub__AccesspermissionsGitHub}; the second application reproduces a \emph{real issue with permissions} which has been reported in the GitHub community forum, and therefore includes more details from the GitHub API. In each application, we perform the three segments of our approach---see \autoref{fig:activity-diagram}. We precede the presentation of the applications by relevant information on the GitHub access policy and a summary of the application setting. Following the presentation of the applications, we discuss decisions we made during the application and their implications. All artifacts mentioned in this section are contained in the \emph{digital package} accompanying the article~\cite{Khakharova_2025_Supplement}. The package also contains instructions on reproducing the applications.
\subsection{GitHub Access Control Policy}\label{subsec:graphql_github}
Each user of GitHub has a personal account. A repository owned by a personal account has two available roles or, as they are referred to in the GitHub documentation, \emph{permission levels}: the repository \emph{owner} and \emph{collaborators}. Owners have full control over a repository\eg they can delete the repository, whereas collaborators are typically only allowed to read and write to the code in a repository. Roles can only be assigned via the GitHub website and not via the API; the GitHub Access Policy does not allow administrative actions---see Section 2.2. When using the GitHub API, user authentication proceeds via \emph{tokens} associated with the personal accounts of users. While there are two types of personal access tokens, the \emph{classic} and the \emph{fine-grained}, in the remainder, unless otherwise indicated, we focus on the classic token---henceforth referred to simply as token. The permissions of a token correspond to the role of the token owner, but can be further limited by so-called \emph{scopes} granted to a token. For example: the scope \emph{repo} grants a collaborator full access to a repository, including read and write privileges as well as access to deployment statuses; the scope \emph{repo\_deployment} grants access to the deployment statuses; a token of a collaborator associated only with the \emph{repo\_deployment} scope grants the collaborator access to the deployment statuses but no access to the code.

The GitHub API access control policy requires that a user interacting with the API is authenticated using their token. The API allows the execution of calls only if the executing user has a role with sufficient permissions and the used token is associated with the necessary scopes; otherwise the API returns an error. The GitHub policy consists of relevant sections in GitHub's online documentation which map roles and scopes to possible actions and, thus, implicitly to API calls. As an example, recall the collaborator from above whose token is only associated with the \emph{repo\_deployment}; the documentation mentions that this scope only allows access to the deployment statuses without allowing access to the code; in terms of API calls, this means that the token only enables the collaborator to fetch information about the field \texttt{deployments} from the object \texttt{Repository} in the GitHub GraphQL schema, but not to access other unrelated fields.%
\subsection{Application Setting}\label{subsec:graphql-analysis_setting}
The presented applications focused on the examples and omitted other data on GitHub. In technical terms, the applications assume an empty initial graph (see \autoref{def:graphAPI}), although this is not required by the presented taint analysis; the analysis only requires that the initial graph does not contain any tainted nodes.

The analysis focused on three roles: the \emph{Owner}, who has the owner permission level and the \emph{repo} scope---giving users with this role full control over the repository plus some additional privileges\eg deleting the repository; the \emph{Collaborator}, who has the collaborator permission level and the \emph{repo} scope---giving users with this role full control over the repository minus the additional privileges of the owner; and the \emph{NoPe-Collaborator}, who has the collaborator permission level but no scopes---granting users with this role read-only access to public information about the repository\eg user profile info. In technical terms, we relied on a role specification where the set of roles $RO$ from our formal definition of a \gapi in \autoref{def:role-based-graphAPI} is equal to $\{\text{\emph{Owner, Collaborator, NoPe-Collaborator}}\}$, while the partial order $\leq_{RO}$ is equal to $\{$\emph{(NoPe-Collaborator, NoPe-Collaborator), (NoPe-Collaborator, Collaborator), (Collaborator, Collaborator), (Collaborator, Owner), (Owner, Owner)}$\}$.

The static analysis segment (see \autoref{fig:overview}) is performed based on the \emph{Henshin project}. Specifically, the GraphQL schemata were manually translated into class diagrams\ie Ecore files, whereas the rules were captured in Henshin diagrams. The CPA (coarse) was then executed using the relevant Henshin feature, presented in~\cite{Born_2017_GranularityConflictsDependenciesGraphTransformationSystems}.

The dynamic analysis segment entails the creation of Python scripts with positive and negative test cases for the tainted flows obtained by the static analysis. Each test case entails a role-based API execution (see \autoref{def:role-based-graphAPI})\ie a sequence of API calls realized by GraphQL queries and mutations and executed using a given token. The scripts use three tokens, corresponding to the roles in $RO$ from above. Positive tests entail the application of source and sink rules by roles with sufficient permissions, whereas negative tests entail the application of a source rule by a role with sufficient permissions and the application of the sink rule by a role with insufficient permissions. The test cases are \emph{minimal}\ie an application of a source rule is directly followed by an application of a sink rule; if the application of a sink rule requires a certain setup\eg the creation of a project requires a repository to be present, this setup is created prior to the sink rule.

The scripts put test cases inside unit tests. The unit-tests are executed and evaluated based on the Pytest library~\cite{Krekel__pytesthelpsyouwritebetterprogramspytestdocumentation}. The execution of the API calls in the unit-tests is performed using the Python GraphQL client~\cite{Krinke__pythongraphqlclientPythonGraphQLClient}. The scripts create the necessary resources ad-hoc; after the execution of the unit tests, the scripts delete all created resources, thus returning again to the empty initial graph state from above.
\subsection{Application to (Extended) Running Example}\label{subsec:graphql_running-example}
\begin{figure}[t]
	\centering
	\includegraphics[scale=.35]{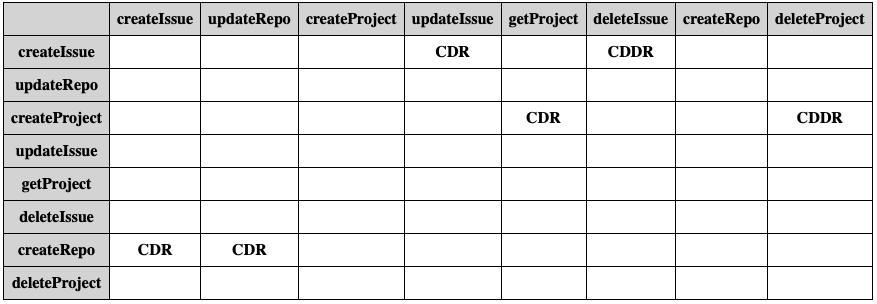}
	\caption{\label{subfig:henshin-output} The output of the dependency analysis (coarse) in Henshin for all the rules in \autoref{fig:rules}; CDR stands for \emph{Create Dependency Reason} and CDDR for \emph{Create-delete Dependency Reason}.}
\end{figure}
We commenced with the setup segment of the analysis and completed it as follows. First, we extended the running example: as presented in \autoref{sec:preliminaries}, in the tainted type graph $TG_t$ only a single node was tainted\ie the \texttt{Repository}, a single source rule was considered\ie \emph{createRepo}, and a single sink rule\ie \emph{updateRepo}. For the extended example, since the unauthorized manipulation of projects and issues could also compromise the data security of the platform, we also tainted the nodes \texttt{Project} and \texttt{Issue} in the schema in \autoref{fig:typeGraph}; moreover, according to queries and mutations in the GitHub API schema, we defined source and sink rules for the creation, deletion, and updating of instances of these newly tainted nodes. These new rules, including the previously presented \emph{createRepo} and \emph{updateRepo} rules, are illustrated in \autoref{fig:rules}.

We continued with the static analysis segment. As discussed earlier, we use Henshin for tool support of the static analysis.  An illustration of the output by Henshin\ie the dependency reasons, of the CPA is shown in \autoref{subfig:henshin-output}. Overall, the CPA returned the following dependencies\ie tainted information flow: (\emph{createIssue}, \emph{updateIssue}), (\emph{createIssue}, \emph{deleteIssue}), (\emph{createProject}, \emph{getProject}), (\emph{createRepo}, \emph{updateRepo}), (\emph{createProject}, \emph{deleteProject}), (\emph{createRepo}, \emph{createIssue}).
Besides the pair (\emph{createProject}, \emph{deleteProject}), for which we found no explicit documentation, all the other detected dependencies are addressed by the documentation for repositories, issues, and projects---see~\cite{GitHub__AccesspermissionsGitHub}. Therefore, the pair in question belongs to the unsecured tainted flow, whereas all other pairs belong to the secured flow---see \autoref{def:review}.

For the dynamic analysis segment, we created a script containing positive and negative taint tests based on the tainted flows obtained by the static analysis. In terms of \autoref{def:taint-test}, a taint test consists of an API execution $(t, role_t)$ and its expected outcome \emph{access}. For example, to address the dependency \emph{(createIssue, updateIssue)}, the script contains a (positive) taint test consisting of the API execution $t: G_0 \stackrel{r_1}{\Rightarrow} G_1 \stackrel{r_2}{\Rightarrow} G_2$  with $r_1$ the \emph{createIssue} rule, $r_2$ the \emph{updateIssue} rule, $role_t(0)=\text{\emph{Owner}}$,  $role_t(1)=\text{\emph{Collaborator}}$, and $\text{\emph{access}}=\text{\emph{true}}$; the script furthermore contains a (negative) taint test consisting of the API execution $t: G_0 \stackrel{r_1}{\Rightarrow} G_1 \stackrel{r_2}{\Rightarrow} G_2$  with $r_1$ the \emph{createIssue} rule, $r_2$ the \emph{updateIssue} rule, $role_t(0)=\text{\emph{Owner}}$, $role_t(1)=\text{\emph{NoPe-Collaborator}}$, and $\text{\emph{access}}=\text{\emph{false}}$.

The set of test cases in the script satisfies (secured and unsecured) \emph{flow coverage} (see \autoref{def:flow-coverage}), as it contains twelve test cases\ie a positive and a negative test case per dependency. Moreover, in order to satisfy the \emph{role coverage} (see \autoref{def:role-coverage}), the set contains six more test cases which ensure the criteria of the role coverage are met\eg in the \emph{(createIssue, updateIssue)} test pair from above, the set adds another taint test, consisting of the API execution $t: G_0 \stackrel{r_1}{\Rightarrow} G_1 \stackrel{r_2}{\Rightarrow} G_2$  with $r_1$ the \emph{createIssue} rule, $r_2$ the \emph{updateIssue} rule, $role_t(1)=\text{\emph{Owner}}$,  $role_t(2)=\text{\emph{Owner}}$, and $\text{\emph{access}}=\text{\emph{true}}$; thereby, this last test ensures that role coverage is satisfied for the owner role in $RO$ in the taint test in question.

For dependencies in the secured flow, all positive and negative tests were successful\ie positive tests did not lead to a BAC exception whereas negative tests led to a BAC exception---see \autoref{subsec:analysis_dynamic}. The same was true for the dependency in the unsecured flow which indicates that, although undocumented, access control is implicitly implemented for this dependency by the GitHub API.
\subsection{Application to GitHub Permissions Issue}\label{subsec:graphql_real-issue}
\begin{figure}[t]
	\centering
	\includegraphics[scale=.5]{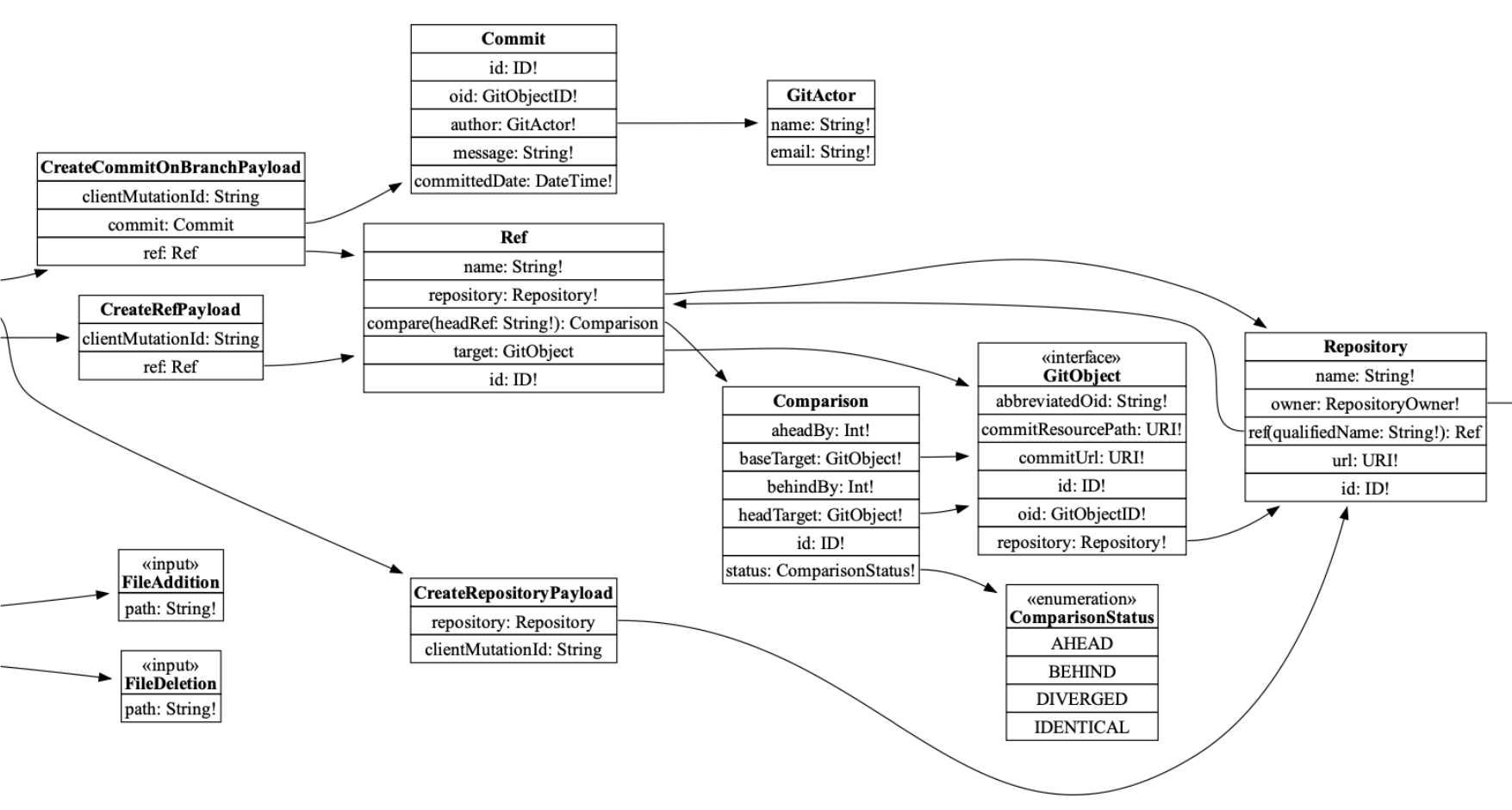}
	\caption{\label{fig:schema-real-excerpt} A partial graphical representation of the GraphQL schema used for the taint analysis of the GitHub permission issue in \autoref{subsec:graphql_real-issue}. The complete representation can be found in the digital package accompanying the article~\cite{Khakharova_2025_Supplement}.}
\end{figure}%
To demonstrate that the proposed taint analysis can be used to test a real vulnerability, we also applied the analysis to an issue with permissions reported on the GitHub community forum---see issue 106598 in~\cite{GitHub__Discussions}. The issue is related to the field \texttt{compare} in the object type \texttt{Ref}; the \texttt{Ref} object represents a branch in Git, and the \texttt{compare} field compares the latest commit in the branch represented by a given \texttt{Ref} object with the latest commit in another user-provided branch---see \autoref{fig:schema-real-excerpt}. The issue is that executing the GraphQL API call that resolves this field\ie performs the comparison, only requires that a token has the \emph{repo} scope when the user is authenticated using the OAuth method; if a user is authenticated using a fine-grained token (see \autoref{subsec:graphql_github}), then the scope is not required. Our goal was to apply the analysis to obtain taint tests which would perform the relevant actions\ie creation of a \texttt{Ref} object and execution of a comparison, and subsequently reproduce the vulnerability by differentiating the authentication method used while executing the API calls in the tests.

\begin{figure}
	\begin{subfigure}[t]{\textwidth}
		\centering
		\includegraphics[width=.9\textwidth]{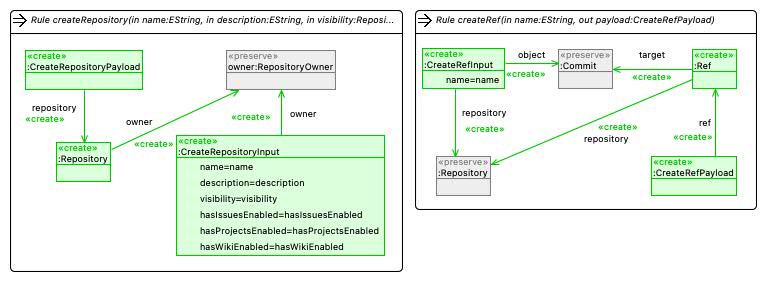}
		\caption{\label{subfig:createRepo-rules-real-example}The rules for creating a \texttt{Repository} (left) and a \texttt{Ref}.}
	\end{subfigure}
	\begin{subfigure}[t]{\textwidth}
		\centering
		\includegraphics[width=.8\textwidth]{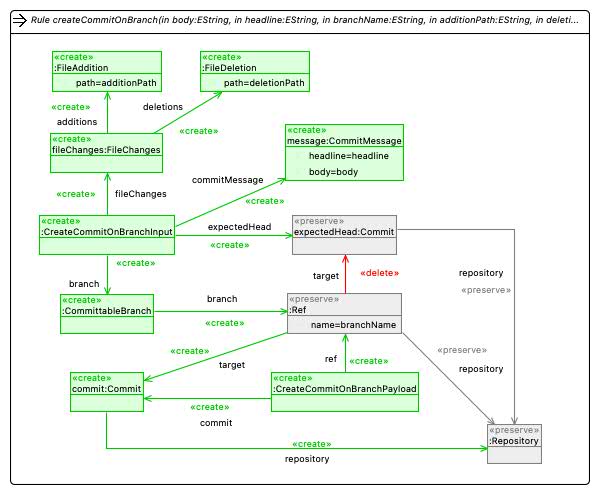}
		\caption{\label{subfig:createCommit-rules-real-example}The rule for creating a \texttt{Commit}.}
	\end{subfigure}
	\begin{subfigure}[t]{\textwidth}
		\centering
		\includegraphics[width=.9\textwidth]{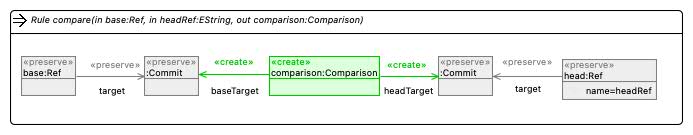}
		\caption{\label{subfig:compare-rules-real-example} The rule for performing a \texttt{Comparison}.}
	\end{subfigure}
	\caption{\label{fig:rules-real} The source and sink rules used in the permissions issue example in \autoref{subsec:graphql_real-issue}.}
\end{figure}

The tainted type graph $TG_t$ obtained in the setup segment addressed more aspects of the GitHub schema and adhered to the conventions of the GraphQL schema language. The $TG_t$ contains fields with input parameters\eg the \texttt{compare} field from above---see \autoref{fig:schema-real-excerpt}, that may alter the behavior of the field resolution; in UML terms, these fields resemble operations and have been modeled as such in the $TG_t$. The GraphQL schema language requires that object types are not used as input parameters; instead, for providing complex, non-scalar input to fields, the schema language defines \emph{input types}\ie data structures, which may include object types. Our $TG_t$ for this example fulfills these requirements. As in the GraphQL schema language, the $TG_t$ contains the two special types \texttt{Query and Mutation}; the content of these types are operations corresponding to queries and mutations in the GraphQL schema. Regarding the output of queries and mutations, the GitHub schema supports complex data structures by automatically generating \emph{payload} objects which comprise all output values from a query or a mutation. Our $TG_t$ correspondingly introduces payload classes for the output of queries and mutations. The GitHub schema defines enums and custom scalar\ie primitive, data types, which are modeled in the $TG_t$ via enum and data type classes, respectively. Finally, the GitHub schema equips many objects with a unique global ID, which can be used to directly retrieve an object and thus to improve performance\eg when caching. In the $TG_t$, the ID fields have been replaced with a many-to-one reference.

Regarding the set of rules, we focused on the mutations in the GitHub schema that create objects that are relevant to the example\ie mutations for creating a repository, a branch, and a commit. Finally, as mentioned above, the behavior of fields with input parameters resembles operations, therefore we also created a mutation for \texttt{compare} which, as in the GitHub schema, takes as parameter the branch to be compared to the \texttt{Ref} object from which the \texttt{compare} is resolved, and returns a \texttt{Comparison} object. The rules are illustrated in \autoref{fig:rules-real}. With respect to modeling of the rules in Henshin, we assume that the input for creating objects is provided as input parameters to the rule; however to represent the execution of a call to the GitHub API accurately, the rules entail the creation of input type objects which use the rule input parameters as fields---as in the GitHub schema. See for example the class \texttt{CreateRepositoryInput} in \autoref{subfig:createRepo-rules-real-example}. We tainted the nodes in the $TG_t$ that are relevant to the actions required for a comparison as well as the comparison itself\ie the nodes \texttt{Repository}, \texttt{Ref}, \texttt{Commit}, and \texttt{Comparison}.

\begin{figure}[t]
	\centering
	\includegraphics[width=.6\textwidth]{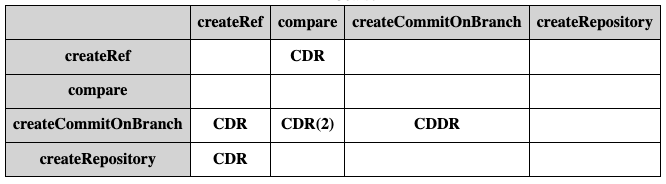}
	\caption{\label{fig:henshin-output-real} The output of the dependency analysis for the rules in \autoref{fig:rules-real}.}
\end{figure}
We completed the static segment by performing CPA between source and sink rules. The result of the analysis is shown in \autoref{fig:henshin-output-real}. The completion of the dynamic segment entailed the creation of a script which creates the necessary setup\ie the repository and two branches, and focuses on the dependency relevant to the reported issue\ie the creation of a commit followed by an execution of a comparison. The script executes two unit-tests: one is creating two commits and then performs the comparison using the OAuth authentication without the \emph{repo} scope, whereas the other creates two commits and uses the fine-grained token, again without the \emph{repo} scope. The execution of the script confirmed the observation reported in the issue 106598 in the GitHub community forum\ie the test using OAuth fails to execute the comparison, whereas the test using the fine-grained token succeeds.
\subsection{Discussion}\label{subsec:graphql_discussion}
The goal of the extended running example in \autoref{subsec:graphql_running-example} was to demonstrate the applicability of the approach as well as to exemplify the formal results\eg flow and role coverage. The goal of the reproduction of the permissions issue in \autoref{subsec:graphql_real-issue} was to obtain a \gapi which covered a larger part of the GraphQL  schema language, thereby addressing the translation of more GraphQL concepts into a graph-transformation-based formalism.

Our modeling decisions when translating the GraphQL schema into the artifacts required by the taint analysis\ie the type graph and the set of transformation rules, were based on two considerations. First, the translation  concerns transferring concepts from an implementation-oriented language into a language of higher abstraction. Therefore, our translation omitted some object types that are closely related to the implementation of applications that would use the API\eg types intended for facilitating the display of query results by the front-end, as well as concepts which are closely related to the runtime behavior\eg \emph{directives}, that is, annotations that enable the description of alternate runtime execution and type validation behavior in GraphQL. On the other hand, to comply to the GraphQL instructions and GitHub conventions, we modeled input types and payloads. The second consideration was that the translation occurs between a textual language for the description of structures into a graphical language with the same purpose. This consideration motivated our choice to translate fields referring to unique IDs into many-to-one references.

Obtaining the type graph and rules for the permissions issue also paved the way for automating this translation, which constitutes part of our future plans for the taint analysis (see \autoref{fig:overview-vision} and \autoref{sec:conclusion}). Although the translation for the permissions issue example addresses a large part of the GraphQL schema language, there are other parts which may require formalization extensions\eg the translation of fields with input parameters which lead to the creation of objects with other input parameters; in practice, such nested fields resemble the invocation of operations within operations and may thus require the nesting of rules or the introduction of control flow elements. We plan to realize such extensions by employing relevant Henshin features\eg units.
\section{Related Work}\label{sec:related}
To the best of our knowledge, there currently does not exist any approach to security testing for Graph APIs with simultaneous support for key features of our approach\ie using taint analysis, focusing on broken access control, relying on graphs, and allowing for the sound detection of vulnerabilities.
In the following, we first discuss three works that are closely related to our approach, in that, they support at least three of the key features.
We then discuss works that focus on either implementing or testing broken access control and rely on graph-based modeling.
Subsequently, we discuss other approaches to security testing for access control.
We then proceed with a discussion of static and dynamic taint analysis for security testing of mobile and web applications, and conclude the section with a discussion of testing approaches for GraphQL APIs.

The work in \cite{Mouelhi_2008_ModelBasedFrameworkSecurityPolicySpecificationDeploymentTesting} presents a framework for the specification, deployment, and testing of security policies, based on Model-driven Engineering.
The framework relies on a class diagram for representing the system structure and on a novel security metamodel, which may be seen as an informal representation of a typed graph, for enabling the specification of rule-based access control policies.
The metamodel enables the specification of policies based on several methods\eg role-based or organization-based.
The framework supports the automated transformation of such policies into security components.
In terms of testing, the authors present a fault model defined at the meta-level that is used for systematically generating test cases using mutation.
Moreover, the authors argue that these policies enable intrinsic consistency (via the metamodel-model relationship) as well as manually created conformance checks. In comparison, our work focuses solely on testing for role-based policies.
Moreover, our work deliberately assumes the existence of an \emph{informal} policy document and instead requires that the security analyst models API calls\ie the manners in which to create and manipulate data in the system, as graph transformation rules.
These rules enable a sound and systematic detection of dependencies between rules.
Dependencies are then verified by the creation of tests which can be adjusted to satisfy coverage criteria.
On the other hand, the generation of tests in our approach is currently manual.
The automation of this process, as well as the extension of the presented formalization to support other access policy methods, are planned in our future work.

Another model-based approach is presented in~\cite{Daoudagh_2020_XACMETXACMLTestingModeling}.
This work focuses on testing access control policies given in the \emph{eXtensible Access Control Markup Language} (XACML)~\cite{OASIS_2025_eXtensibleAccessControlMarkupLanguageXACMLVersion30}.
The approach represents such a policy as a tree that is then used to generate a typed graph\ie a model, representing the policy evaluation.
Based on this typed graph, the approach systematically generates an oracle as well as tests and can, moreover, assess the coverage of a given test suite.
While our approach also relies on an oracle being available, as mentioned in the comparison with the previous work, our work deliberately assumes that the policy is given in natural language. This decision differentiates the two approaches significantly:
our approach enables a static analysis, similar to the work in~\cite{Daoudagh_2020_XACMETXACMLTestingModeling}, via the modeling of the API calls as graph transformation rules but, moreover, entails the dynamic analysis to verify whether the policy has been implemented correctly.

The work in \cite{Mammar_2016_FormalDevelopmentSecureAccessControlFilter} uses UML notations to specify access control\ie the structure of the data is modeled as a class diagram while security rules are modeled as SecureUML and  activity diagrams.
The authors present rules that translate the UML notations into a specification in a formal language.
The specification can then be systematically checked for consistency and validated.
Subsequently, the validated specification can be used for the systematic derivation of a formal, proved access control filter. While this approach is not intended for testing but rather for implementing access control, this approach relates to our work in its modeling of the data as a graph\ie a class diagram, as well as in its capability for formal precision.

There exist a number of approaches formalizing, analyzing, and implementing access control policies based on \emph{graph-based representations}. Implementation of access control based on graphs typically entails the modeling of a policy as a graph that is stored in a graph database; the access control mechanism then checks compliance to the policy based on graph queries\eg~\cite{Yang_2024_GraphBasedFrameworkABACPolicyEnforcementAnalysis,PinaRos_2012_GraphbasedXACMLevaluation,Jin__XACMLImplementationBasedGraphDatabase}. Other works implement mechanisms to control access \emph{to} a graph database by rewriting queries such that are compliant to the policy before passing them on for evaluation to the database\eg~\cite{Hofer_2023_RewritingGraphDBQueriesEnforceAttributeBasedAccessControl,BereksiReguig_2024_EffectiveAttributeBasedAccessControlModelNeo4j}.

Regarding analysis and formalization, Koch et al.~\cite{KochMP05} use a sophisticated formalization to compare and analyze  different policy models. We decided to equip our taint analysis with a first relatively simple access control policy model. It lowers the barrier for applying our analysis and allow cases, where only natural language descriptions for access control policies are given. We plan to integrate more complex formalizations into our taint analysis in future work. Burger et al.~\cite{BurgerJW15} present an approach to detect (and correct) \emph{vulnerabilities in evolving design models} of software systems. Graph transformation is used to formalize patterns used to detect security flaws. This approach is not dedicated specifically to Graph APIs, nor broken access control.
The works in~\cite{RayLFK04} and~\cite{Alves_2017_graphbasedframeworkanalysisaccesscontrolpolicies} use graph-based approaches to visualize and analyze access control policies, while the work in~\cite{Mohamed_2021_ExtendedAuthorizationPolicyGraphStructuredData} extends XACML for graph-structured data.
The suitability of graph-based representations for representing policies and enabling access control is also being investigated by recent approaches that realize access control via Machine Learning~\cite{You_2023_knowledgegraphempoweredonlinelearningframeworkaccesscontroldecisionmaking,Yin_2024_heterogeneousgraphbasedsemisupervisedlearningframeworkaccesscontroldecisionmaking}.

A number of approaches to testing for broken access control rely on \emph{model-based testing}. These approaches typically generate tests for the access control based on a formal or semi-formal model of the policy or the system. For instance, the works in~\cite{Masood_2009_ScalableEffectiveTestGenerationRoleBasedAccessControlSystems, Fan_2024_TestCaseGenerationAccessControlBasedUMLActivityDiagram,Pretschner_2008_ModelBasedTestsAccessControlPolicies} model the policy as a finite state model, as activity diagrams, and as a metamodel, respectively; the work in~\cite{Xu_2012_modelbasedapproachautomatedtestingaccesscontrolpolicies} models the system as a high-level Petri Net. In comparison, our approach opts for assuming the policy is only informally documented and focuses on dependencies between interactions that may cause vulnerabilities. Moreover, our approach entails the dynamic analysis that aims to verify whether the access control has been implemented correctly.

Several \emph{static and dynamic taint analysis} approaches for security testing of \emph{mobile or web applications} exist, see e.g.~\cite{EnckGCCJMS14,KuznetsovGTGZ15,TrippPFSW09}. These approaches also analyze the data flow of security-relevant or private data, but there is currently no approach dedicated to Graph APIs, nor broken access control in particular. Pagey et al.~\cite{Pagey+23} present a security evaluation framework focusing also on access control vulnerabilities, but it is specifically dedicated to software-as-a-service e-commerce platforms.

The following approaches are dedicated to \emph{testing GraphQL APIs}, but do not focus on finding \emph{security vulnerabilities} such as broken access control.  Belhadi et al.~\cite{BelhadiZA24} present a fuzzing approach based on evolutionary search. Karlsson et al.~\cite{KarlssonCS21} introduce a property-based testing approach. Vargas et al.~\cite{vargas_deviation_2018} present a regression testing approach. Zetterlund et al.~\cite{ZetterlundTMB22} present an approach to mine GraphQL queries from production with the aim of testing the GraphQL schema.
\section{Conclusion and Future Work}\label{sec:conclusion}
\begin{figure}[t]
	\centering
	\begin{subfigure}[t]{.49\textwidth}
		\includegraphics[width=\textwidth]{figures/overview}
	\end{subfigure}
	\begin{subfigure}[t]{.49\textwidth}
		\includegraphics[width=\textwidth]{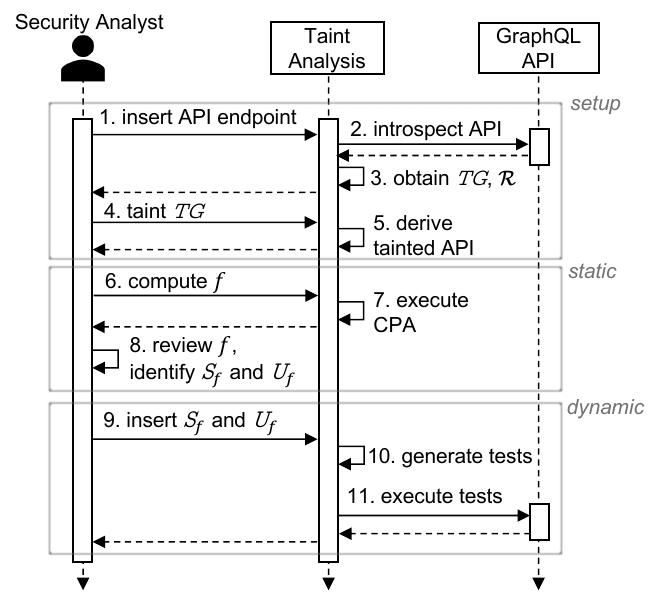}
	\end{subfigure}
	\caption{\label{fig:overview-vision} The overview of the sequence of steps in the presented approach from \autoref{fig:overview} (left), and the vision for the automation of the approach (right).}
\end{figure}
We have presented a \emph{systematic approach supporting security analysts to test for broken access control vulnerabilities in Graph APIs}.  The security testing approach is based on taint analysis and consists of a static and dynamic segment.  The static taint analysis primarily aims at validating the access control policies.  The dynamic taint analysis aims at finding errors in the implementation of the (validated) access control policies.

Future work comprises \emph{increasing the level of automation} of the taint analysis as illustrated in \autoref{fig:overview-vision} (right). In particular, it is part of current work to provide automatic translations from Graph APIs such as GraphQL into tainted typed graph transformation systems.  Naturally, the validation then still will include a manual review, but the formal specification could at least in parts be mined from the GraphQL implementation. There is also room for improving automation of the dynamic segment of our analysis. Currently, we provided automated test execution scripts, but we are also planning to explore the automation of the test generation step or coverage analysis for existing test suites. Moreover, we want to \emph{evaluate the approach} on further and larger case studies and test its generality, also in the context of other relevant domains such as access control to graph databases. As mentioned in \autoref{sec:analysis}, we plan to devise a \emph{more fine-grained} sound detection technique for potential \emph{indirect vulnerabilities} related to broken access control relying on a new type of critical pair analysis for indirect dependencies. Moreover, as mentioned in Section~\ref{sec:related}, a further line of future work is the integration of more \emph{detailed graph-based formalizations} of (other types of) access control. Finally, although for simplicity we omitted attribute manipulation and concepts like inheritance in the current formalization of Graph APIs, it seems feasible to incorporate such more sophisticated concepts based on the state-of-the-art for graph transformation.
\section*{Acknowledgement}
\noindent The authors would like to thank the anonymous reviewers for their insightful comments and suggestions.
\bibliographystyle{alphaurl}
\bibliography{literature}

\newcommand{\etalchar}[1]{$^{#1}$}
\begin{thebibliography}{PRLGM12}

\bibitem[ABJ{\etalchar{+}}10]{Arendt_2010_HenshinAdvancedConceptsToolsInPlaceEMFModelTransformations}
Thorsten Arendt, Enrico Biermann, Stefan Jurack, Christian Krause, and Gabriele
  Taentzer.
\newblock Henshin: {{Advanced Concepts}} and {{Tools}} for {{In-Place EMF Model
  Transformations}}.
\newblock In Dorina~C. Petriu, Nicolas Rouquette, and {\O}ystein Haugen,
  editors, {\em Model {{Driven Engineering Languages}} and {{Systems}}},
  Lecture {{Notes}} in {{Computer Science}}, pages 121--135, Berlin,
  Heidelberg, 2010. Springer.
\newblock \href {https://doi.org/10.1007/978-3-642-16145-2{\_}9}
  {\path{doi:10.1007/978-3-642-16145-2{\_}9}}.

\bibitem[AF17]{Alves_2017_graphbasedframeworkanalysisaccesscontrolpolicies}
Sandra Alves and Maribel Fernández.
\newblock A graph-based framework for the analysis of access control policies.
\newblock {\em Theoretical Computer Science}, 685:3--22, July 2017.
\newblock URL:
  \url{https://www.sciencedirect.com/science/article/pii/S0304397516305965},
  \href {https://doi.org/10.1016/j.tcs.2016.10.018}
  {\path{doi:10.1016/j.tcs.2016.10.018}}.

\bibitem[BJW15]{BurgerJW15}
Jens B{\"{u}}rger, Jan J{\"{u}}rjens, and Sven Wenzel.
\newblock Restoring security of evolving software models using graph
  transformation.
\newblock {\em Int. J. Softw. Tools Technol. Transf.}, 17(3):267--289, 2015.
\newblock URL: \url{https://doi.org/10.1007/s10009-014-0364-8}, \href
  {https://doi.org/10.1007/S10009-014-0364-8}
  {\path{doi:10.1007/S10009-014-0364-8}}.

\bibitem[BLST17]{Born_2017_GranularityConflictsDependenciesGraphTransformationSystems}
Kristopher Born, Leen Lambers, Daniel Str{\"u}ber, and Gabriele Taentzer.
\newblock Granularity of {{Conflicts}} and {{Dependencies}} in {{Graph
  Transformation Systems}}.
\newblock In Juan {de Lara} and Detlef Plump, editors, {\em Graph
  {{Transformation}}}, Lecture {{Notes}} in {{Computer Science}}, pages
  125--141, Cham, 2017. Springer International Publishing.
\newblock \href {https://doi.org/10.1007/978-3-319-61470-0_8}
  {\path{doi:10.1007/978-3-319-61470-0_8}}.

\bibitem[BRMI24]{BereksiReguig_2024_EffectiveAttributeBasedAccessControlModelNeo4j}
Adil~Achraf Bereksi~Reguig, Houari Mahfoud, and Abdessamad Imine.
\newblock Towards an {Effective} {Attribute}-{Based} {Access} {Control}
  {Model} for {Neo4j}.
\newblock In Mohamed Mosbah, Tahar Kechadi, Ladjel Bellatreche, and Faiez
  Gargouri, editors, {\em Model and {Data} {Engineering}}, pages 352--366,
  Cham, 2024. Springer Nature Switzerland.
\newblock \href {https://doi.org/10.1007/978-3-031-49333-1_25}
  {\path{doi:10.1007/978-3-031-49333-1_25}}.

\bibitem[BZA24]{BelhadiZA24}
Asma Belhadi, Man Zhang, and Andrea Arcuri.
\newblock Random testing and evolutionary testing for fuzzing graphql apis.
\newblock {\em {ACM} Trans. Web}, 18(1):14:1--14:41, 2024.
\newblock \href {https://doi.org/10.1145/3609427} {\path{doi:10.1145/3609427}}.

\bibitem[DLM20]{Daoudagh_2020_XACMETXACMLTestingModeling}
Said Daoudagh, Francesca Lonetti, and Eda Marchetti.
\newblock {XACMET}: {XACML} {Testing} \& {Modeling}.
\newblock {\em Software Quality Journal}, 28(1):249--282, March 2020.
\newblock \href {https://doi.org/10.1007/s11219-019-09470-5}
  {\path{doi:10.1007/s11219-019-09470-5}}.

\bibitem[EEPT06]{Ehrig_2006_FundamentalsAlgebraicGraphTransformation}
Hartmut Ehrig, Karsten Ehrig, Ulrike Prange, and Gabriele Taentzer.
\newblock {\em Fundamentals of {{Algebraic Graph Transformation}}}.
\newblock Monographs in {{Theoretical Computer Science}}. {{An EATCS Series}}.
  Springer-Verlag, Berlin/Heidelberg, 2006.
\newblock \href {https://doi.org/10.1007/3-540-31188-2}
  {\path{doi:10.1007/3-540-31188-2}}.

\bibitem[EGC{\etalchar{+}}14]{EnckGCCJMS14}
William Enck, Peter Gilbert, Byung{-}Gon Chun, Landon~P. Cox, Jaeyeon Jung,
  Patrick~D. McDaniel, and Anmol Sheth.
\newblock Taintdroid: an information flow tracking system for real-time privacy
  monitoring on smartphones.
\newblock {\em Commun. {ACM}}, 57(3):99--106, 2014.
\newblock \href {https://doi.org/10.1145/2494522} {\path{doi:10.1145/2494522}}.

\bibitem[FLWL24]{Fan_2024_TestCaseGenerationAccessControlBasedUMLActivityDiagram}
Ao~Fan, Li~Liao, Lulu Wang, and Bixin Li.
\newblock Test {Case} {Generation} for {Access} {Control} {Based} on {UML}
  {Activity} {Diagram}.
\newblock In {\em 2024 {IEEE} 24th {International} {Conference} on {Software}
  {Quality}, {Reliability} and {Security} ({QRS})}, pages 95--104, July 2024.
\newblock ISSN: 2693-9177.
\newblock URL: \url{https://ieeexplore.ieee.org/abstract/document/10684652},
  \href {https://doi.org/10.1109/QRS62785.2024.00019}
  {\path{doi:10.1109/QRS62785.2024.00019}}.

\bibitem[GFa]{GraphQLFoundation__GraphQLquerylanguageyourAPI}
The GraphQL~Foundation.
\newblock {{GraphQL}} {\textbar} {{A}} query language for your {{API}}.
\newblock \url{https://graphql.org/}.
\newblock accessed: 2023-12-05.

\bibitem[GFb]{GraphQLFoundation__SchemasTypesGraphQL}
The GraphQL~Foundation.
\newblock Schemas and {{Types}} {\textbar} {{GraphQL}}.
\newblock \url{https://graphql.org/learn/schema/}.
\newblock accessed: 2024-03-08.

\bibitem[Gita]{GitHub__AccesspermissionsGitHub}
Inc GitHub.
\newblock Access permissions on {{GitHub}}.
\newblock
  \url{https://docs.github.com/en/get-started/learning-about-github/access-permissions-on-github}.
\newblock accessed: 2024-03-08.

\bibitem[Gitb]{GitHub__Discussions}
Inc GitHub.
\newblock {{GitHub GraphQL API}} community discussions.
\newblock \url{https://github.com/orgs/community/discussions/}.
\newblock accessed: 2024-03-01.

\bibitem[Gitc]{GitHub__GitHubGraphQLAPIdocumentation}
Inc GitHub.
\newblock {{GitHub GraphQL API}} documentation.
\newblock \url{https://docs.github.com/en/graphql}.
\newblock accessed: 2024-03-01.

\bibitem[Gitd]{GitHub__Permissionlevelspersonalaccountrepository}
Inc GitHub.
\newblock Permission levels for a personal account repository.
\newblock
  \url{https://docs.github.com/en/account-and-profile/setting-up-and-managing-your-personal-account-on-github/managing-user-account-settings/permission-levels-for-a-personal-account-repository}.
\newblock accessed: 2024-03-06.

\bibitem[Gra07]{Graw07}
Gary~Mc Graw.
\newblock {\em Software Secuirty: Building Security in}.
\newblock Addison-Wesley Professional, 2007.

\bibitem[HKY17]{Hu_2017_VerificationTestMethodsAccessControlPoliciesModels}
Vincent~C Hu, Rick Kuhn, and Dylan Yaga.
\newblock Verification and {{Test Methods}} for {{Access Control
  Policies}}/{{Models}}.
\newblock Technical Report NIST SP 800-192, {National Institute of Standards
  and Technology}, Gaithersburg, MD, June 2017.
\newblock \href {https://doi.org/10.6028/NIST.SP.800-192}
  {\path{doi:10.6028/NIST.SP.800-192}}.

\bibitem[HLG13]{HildebrandtLG13}
Stephan Hildebrandt, Leen Lambers, and Holger Giese.
\newblock Complete specification coverage in automatically generated
  conformance test cases for {TGG} implementations.
\newblock In Keith Duddy and Gerti Kappel, editors, {\em Theory and Practice of
  Model Transformations - 6th International Conference, ICMT@STAF 2013,
  Budapest, Hungary, June 18-19, 2013. Proceedings}, volume 7909 of {\em
  Lecture Notes in Computer Science}, pages 174--188. Springer, 2013.
\newblock \href {https://doi.org/10.1007/978-3-642-38883-5\_16}
  {\path{doi:10.1007/978-3-642-38883-5\_16}}.

\bibitem[HMA{\etalchar{+}}23]{Hofer_2023_RewritingGraphDBQueriesEnforceAttributeBasedAccessControl}
Daniel Hofer, Aya Mohamed, Dagmar Auer, Stefan Nadschläger, and Josef Küng.
\newblock Rewriting {Graph}-{DB} {Queries} to {Enforce} {Attribute}-{Based}
  {Access} {Control}.
\newblock In Christine Strauss, Toshiyuki Amagasa, Gabriele Kotsis, A.~Min
  Tjoa, and Ismail Khalil, editors, {\em Database and {Expert} {Systems}
  {Applications}}, pages 431--436, Cham, 2023. Springer Nature Switzerland.
\newblock \href {https://doi.org/10.1007/978-3-031-39847-6_34}
  {\path{doi:10.1007/978-3-031-39847-6_34}}.

\bibitem[HT20]{HeckelT20}
Reiko Heckel and Gabriele Taentzer.
\newblock {\em Graph Transformation for Software Engineers - With Applications
  to Model-Based Development and Domain-Specific Language Engineering}.
\newblock Springer, 2020.
\newblock \href {https://doi.org/10.1007/978-3-030-43916-3}
  {\path{doi:10.1007/978-3-030-43916-3}}.

\bibitem[JK19]{Jin__XACMLImplementationBasedGraphDatabase}
Ying Jin and Krishna~Chaitanya Kaja.
\newblock Xacml implementation based on graph database.
\newblock {\em Proceedings of 34th International Confer}, 58:65--74, 2019.

\bibitem[KCS21]{KarlssonCS21}
Stefan Karlsson, Adnan Causevic, and Daniel Sundmark.
\newblock Automatic property-based testing of graphql apis.
\newblock In {\em 2nd {IEEE/ACM} International Conference on Automation of
  Software Test, AST@ICSE 2021, Madrid, Spain, May 20-21, 2021}, pages 1--10.
  {IEEE}, 2021.
\newblock \href {https://doi.org/10.1109/AST52587.2021.00009}
  {\path{doi:10.1109/AST52587.2021.00009}}.

\bibitem[Kd]{Krekel__pytesthelpsyouwritebetterprogramspytestdocumentation}
Holger Krekel and {dev team pytest}.
\newblock Pytest: Helps you write better programs --- pytest documentation.
\newblock \url{https://docs.pytest.org/en/8.0.x/}.
\newblock accessed: 2024-03-06.

\bibitem[KGT{\etalchar{+}}15]{KuznetsovGTGZ15}
Konstantin Kuznetsov, Alessandra Gorla, Ilaria Tavecchia, Florian Gross, and
  Andreas Zeller.
\newblock Mining android apps for anomalies.
\newblock In Christian Bird, Tim Menzies, and Thomas Zimmermann, editors, {\em
  The Art and Science of Analyzing Software Data}, pages 257--283. Morgan
  Kaufmann / Elsevier, 2015.
\newblock URL: \url{https://doi.org/10.1016/b978-0-12-411519-4.00010-0}, \href
  {https://doi.org/10.1016/B978-0-12-411519-4.00010-0}
  {\path{doi:10.1016/B978-0-12-411519-4.00010-0}}.

\bibitem[KMP05]{KochMP05}
Manuel Koch, Luigi~V. Mancini, and Francesco Parisi{-}Presicce.
\newblock Graph-based specification of access control policies.
\newblock {\em J. Comput. Syst. Sci.}, 71(1):1--33, 2005.
\newblock URL: \url{https://doi.org/10.1016/j.jcss.2004.11.002}, \href
  {https://doi.org/10.1016/J.JCSS.2004.11.002}
  {\path{doi:10.1016/J.JCSS.2004.11.002}}.

\bibitem[Kri]{Krinke__pythongraphqlclientPythonGraphQLClient}
Justin Krinke.
\newblock Python-graphql-client: {{Python GraphQL Client}}.

\bibitem[KSL25]{Khakharova_2025_Supplement}
Taisiya Khakharova, Lucas Sakizloglou, and Leen Lambers.
\newblock Supplement to "taint analysis for graph apis focusing on broken
  access control" ({LMCS} journal), January 2025.
\newblock accessed: 2025-01-15.
\newblock \href {https://doi.org/10.5281/zenodo.14524445}
  {\path{doi:10.5281/zenodo.14524445}}.

\bibitem[LBK{\etalchar{+}}19]{Lambers_2019_Granularityconflictsdependenciesgraphtransformationsystemstwodimensionalapproach}
Leen Lambers, Kristopher Born, Jens Kosiol, Daniel Str{\"u}ber, and Gabriele
  Taentzer.
\newblock Granularity of conflicts and dependencies in graph transformation
  systems: {{A}} two-dimensional approach.
\newblock {\em Journal of Logical and Algebraic Methods in Programming},
  103:105--129, February 2019.
\newblock \href {https://doi.org/10.1016/j.jlamp.2018.11.004}
  {\path{doi:10.1016/j.jlamp.2018.11.004}}.

\bibitem[LBO{\etalchar{+}}18]{Lambers_2018_InitialConflictsDependenciesCriticalPairsRevisited}
Leen Lambers, Kristopher Born, Fernando Orejas, Daniel Str{\"u}ber, and
  Gabriele Taentzer.
\newblock Initial {{Conflicts}} and {{Dependencies}}: {{Critical Pairs
  Revisited}}.
\newblock In Reiko Heckel and Gabriele Taentzer, editors, {\em Graph
  {{Transformation}}, {{Specifications}}, and {{Nets}}: {{In Memory}} of
  {{Hartmut Ehrig}}}, Lecture {{Notes}} in {{Computer Science}}, pages
  105--123. Springer International Publishing, Cham, 2018.
\newblock \href {https://doi.org/10.1007/978-3-319-75396-6_6}
  {\path{doi:10.1007/978-3-319-75396-6_6}}.

\bibitem[LEO08]{Lambers_2008_EfficientConflictDetectionGraphTransformationSystemsEssentialCriticalPairs}
Leen Lambers, Hartmut Ehrig, and Fernando Orejas.
\newblock Efficient {{Conflict Detection}} in {{Graph Transformation Systems}}
  by {{Essential Critical Pairs}}.
\newblock {\em Electronic Notes in Theoretical Computer Science}, 211:17--26,
  April 2008.
\newblock \href {https://doi.org/10.1016/j.entcs.2008.04.026}
  {\path{doi:10.1016/j.entcs.2008.04.026}}.

\bibitem[LSAWK24]{Lambers_2024_TaintAnalysisGraphAPIsFocusingBrokenAccessControl}
Leen Lambers, Lucas Sakizloglou, Osama Al-Wardi, and Taisiya Khakharova.
\newblock Taint {Analysis} for {Graph} {APIs} {Focusing} on {Broken} {Access}
  {Control}.
\newblock In Russ Harmer and Jens Kosiol, editors, {\em Graph
  {Transformation}}, pages 180--200, Cham, 2024. Springer Nature Switzerland.
\newblock tex.ids=
  Lambers\_2024\_TaintAnalysisGraphAPIsFocusingBrokenAccessControla.
\newblock \href {https://doi.org/10.1007/978-3-031-64285-2_10}
  {\path{doi:10.1007/978-3-031-64285-2_10}}.

\bibitem[LST{\etalchar{+}}18]{Lambers_2018_Multigranularconflictdependencyanalysissoftwareengineeringbasedgraphtransformationa}
Leen Lambers, Daniel Str{\"u}ber, Gabriele Taentzer, Kristopher Born, and
  Jevgenij Huebert.
\newblock Multi-granular conflict and dependency analysis in software
  engineering based on graph transformation.
\newblock In {\em Proceedings of the 40th {{International Conference}} on
  {{Software Engineering}}}, {{ICSE}} '18, pages 716--727, New York, NY, USA,
  May 2018. Association for Computing Machinery.
\newblock \href {https://doi.org/10.1145/3180155.3180258}
  {\path{doi:10.1145/3180155.3180258}}.

\bibitem[MAHK21]{Mohamed_2021_ExtendedAuthorizationPolicyGraphStructuredData}
Aya Mohamed, Dagmar Auer, Daniel Hofer, and Josef Küng.
\newblock Extended {Authorization} {Policy} for {Graph}-{Structured} {Data}.
\newblock {\em SN Computer Science}, 2(5):351, June 2021.
\newblock \href {https://doi.org/10.1007/s42979-021-00684-8}
  {\path{doi:10.1007/s42979-021-00684-8}}.

\bibitem[MBGM09]{Masood_2009_ScalableEffectiveTestGenerationRoleBasedAccessControlSystems}
Ammar Masood, Rafae Bhatti, Arif Ghafoor, and Aditya~P. Mathur.
\newblock Scalable and {Effective} {Test} {Generation} for {Role}-{Based}
  {Access} {Control} {Systems}.
\newblock {\em IEEE Transactions on Software Engineering}, 35(5):654--668,
  September 2009.
\newblock URL: \url{https://ieeexplore.ieee.org/abstract/document/4967616},
  \href {https://doi.org/10.1109/TSE.2009.35} {\path{doi:10.1109/TSE.2009.35}}.

\bibitem[Met]{Meta__OverviewGraphAPIDocumentation}
Inc Meta.
\newblock Overview - {{Graph API}} - {{Documentation}}.
\newblock \url{https://developers.facebook.com/docs/graph-api/overview/}.
\newblock accessed: 2024-03-01.

\bibitem[MFBLT08]{Mouelhi_2008_ModelBasedFrameworkSecurityPolicySpecificationDeploymentTesting}
Tejeddine Mouelhi, Franck Fleurey, Benoit Baudry, and Yves Le~Traon.
\newblock A {Model}-{Based} {Framework} for {Security} {Policy}
  {Specification}, {Deployment} and {Testing}.
\newblock In {\em Lecture {Notes} in {Computer} {Science}}, pages 537--552.
  Springer Berlin Heidelberg, Berlin, Heidelberg, 2008.
\newblock ISSN: 0302-9743, 1611-3349.
\newblock URL: \url{http://link.springer.com/10.1007/978-3-540-87875-9_38},
  \href {https://doi.org/10.1007/978-3-540-87875-9_38}
  {\path{doi:10.1007/978-3-540-87875-9_38}}.

\bibitem[MNL16]{Mammar_2016_FormalDevelopmentSecureAccessControlFilter}
Amel Mammar, Thi~Mai Nguyen, and Régine Laleau.
\newblock Formal {Development} of a {Secure} {Access} {Control} {Filter}.
\newblock In {\em 2016 {IEEE} 17th {International} {Symposium} on {High}
  {Assurance} {Systems} {Engineering} ({HASE})}, pages 173--180, January 2016.
\newblock ISSN: 1530-2059.
\newblock URL: \url{https://ieeexplore.ieee.org/document/7423149/}, \href
  {https://doi.org/10.1109/HASE.2016.10} {\path{doi:10.1109/HASE.2016.10}}.

\bibitem[{OAS}]{OASIS_2025_eXtensibleAccessControlMarkupLanguageXACMLVersion30}
{OASIS}.
\newblock {eXtensible} {Access} {Control} {Markup} {Language} ({XACML})
  {Version} 3.0.
\newblock URL:
  \url{https://docs.oasis-open.org/xacml/3.0/xacml-3.0-core-spec-os-en.html}.

\bibitem[Plu94]{Plump_1994_Criticalpairstermgraphrewriting}
Detlef Plump.
\newblock Critical pairs in term graph rewriting.
\newblock In Igor Pr{\'i}vara, Branislav Rovan, and Peter Ruzi{\v c}ka,
  editors, {\em Mathematical {{Foundations}} of {{Computer Science}} 1994},
  Lecture {{Notes}} in {{Computer Science}}, pages 556--566, Berlin,
  Heidelberg, 1994. Springer.
\newblock \href {https://doi.org/10.1007/3-540-58338-6_102}
  {\path{doi:10.1007/3-540-58338-6_102}}.

\bibitem[PMT08]{Pretschner_2008_ModelBasedTestsAccessControlPolicies}
Alexander Pretschner, Tejeddine Mouelhi, and Yves~Le Traon.
\newblock Model-{Based} {Tests} for {Access} {Control} {Policies}.
\newblock In {\em and {Validation} 2008 1st {International} {Conference} on
  {Software} {Testing}, {Verification}}, pages 338--347, April 2008.
\newblock ISSN: 2159-4848.
\newblock URL: \url{https://ieeexplore.ieee.org/abstract/document/4539561},
  \href {https://doi.org/10.1109/ICST.2008.44}
  {\path{doi:10.1109/ICST.2008.44}}.

\bibitem[PMY23]{Pagey+23}
Rohan Pagey, Mohammad Mannan, and Amr Youssef.
\newblock All your shops are belong to us: Security weaknesses in e-commerce
  platforms.
\newblock In {\em Proceedings of the ACM Web Conference 2023}, WWW '23, page
  2144–2154, New York, NY, USA, 2023. Association for Computing Machinery.
\newblock \href {https://doi.org/10.1145/3543507.3583319}
  {\path{doi:10.1145/3543507.3583319}}.

\bibitem[PRLGM12]{PinaRos_2012_GraphbasedXACMLevaluation}
Santiago Pina~Ros, Mario Lischka, and Félix Gómez~Mármol.
\newblock Graph-based {XACML} evaluation.
\newblock In {\em Proceedings of the 17th {ACM} symposium on {Access} {Control}
  {Models} and {Technologies}}, {SACMAT} '12, pages 83--92, New York, NY, USA,
  June 2012. Association for Computing Machinery.
\newblock URL: \url{https://dl.acm.org/doi/10.1145/2295136.2295153}, \href
  {https://doi.org/10.1145/2295136.2295153}
  {\path{doi:10.1145/2295136.2295153}}.

\bibitem[PY07]{PezzeYoung07}
Mauro Pezz{\`{e}} and Michal Young.
\newblock {\em Software testing and analysis - process, principles and
  techniques}.
\newblock Wiley, 2007.

\bibitem[QFGR23]{Quina-MeraFGR23}
Antonio Qui{\~{n}}a{-}Mera, Pablo Fernandez, Jos{\'{e}}~Mar{\'{\i}}a
  Garc{\'{\i}}a, and Antonio Ruiz{-}Cort{\'{e}}s.
\newblock Graphql: {A} systematic mapping study.
\newblock {\em {ACM} Comput. Surv.}, 55(10):202:1--202:35, 2023.
\newblock \href {https://doi.org/10.1145/3561818} {\path{doi:10.1145/3561818}}.

\bibitem[RKH13]{RungeKH13}
Olga Runge, Tamim~Ahmed Khan, and Reiko Heckel.
\newblock Test case generation using visual contracts.
\newblock {\em Electron. Commun. Eur. Assoc. Softw. Sci. Technol.}, 58, 2013.
\newblock URL: \url{https://doi.org/10.14279/tuj.eceasst.58.847}, \href
  {https://doi.org/10.14279/TUJ.ECEASST.58.847}
  {\path{doi:10.14279/TUJ.ECEASST.58.847}}.

\bibitem[RLFK04]{RayLFK04}
Indrakshi Ray, Na~Li, Robert~B. France, and Dae{-}Kyoo Kim.
\newblock Using uml to visualize role-based access control constraints.
\newblock In Trent Jaeger and Elena Ferrari, editors, {\em 9th {ACM} Symposium
  on Access Control Models and Technologies, {SACMAT} 2004, Yorktown Heights,
  New York, USA, June 2-4, 2004, Proceedings}, pages 115--124. {ACM}, 2004.
\newblock \href {https://doi.org/10.1145/990036.990054}
  {\path{doi:10.1145/990036.990054}}.

\bibitem[Roz97]{1997handbook}
Grzegorz Rozenberg, editor.
\newblock {\em Handbook of Graph Grammars and Computing by Graph
  Transformations, Volume 1: Foundations}.
\newblock World Scientific, 1997.

\bibitem[{Tea}]{TeamOWASPTop10__A01BrokenAccessControlOWASPTop102021}
{Team OWASP Top 10}.
\newblock A01 {{Broken Access Control}} - {{OWASP Top}} 10:2021.
\newblock \url{https://owasp.org/Top10/A01_2021-Broken_Access_Control/}.
\newblock accessed: 2024-03-06.

\bibitem[TPF{\etalchar{+}}09]{TrippPFSW09}
Omer Tripp, Marco Pistoia, Stephen~J. Fink, Manu Sridharan, and Omri Weisman.
\newblock {TAJ:} effective taint analysis of web applications.
\newblock In Michael Hind and Amer Diwan, editors, {\em Proceedings of the 2009
  {ACM} {SIGPLAN} Conference on Programming Language Design and Implementation,
  {PLDI} 2009, Dublin, Ireland, June 15-21, 2009}, pages 87--97. {ACM}, 2009.
\newblock \href {https://doi.org/10.1145/1542476.1542486}
  {\path{doi:10.1145/1542476.1542486}}.

\bibitem[VBV{\etalchar{+}}18]{vargas_deviation_2018}
Daniela~Meneses Vargas, Alison~Fernandez Blanco, Andreina~Cota Vidaurre, Juan
  Pablo~Sandoval Alcocer, Milton~Mamani Torres, Alexandre Bergel, and Stephane
  Ducasse.
\newblock Deviation {Testing}: {A} {Test} {Case} {Generation} {Technique} for
  {GraphQL} {APIs}.
\newblock In {\em Internation {Workshop} on {Smalltalk} {Technologies}
  ({IWST})}, 2018.

\bibitem[XTK{\etalchar{+}}12]{Xu_2012_modelbasedapproachautomatedtestingaccesscontrolpolicies}
Dianxiang Xu, Lijo Thomas, Michael Kent, Tejeddine Mouelhi, and Yves Le~Traon.
\newblock A model-based approach to automated testing of access control
  policies.
\newblock In {\em Proceedings of the 17th {ACM} symposium on {Access} {Control}
  {Models} and {Technologies}}, {SACMAT} '12, pages 209--218, New York, NY,
  USA, June 2012. Association for Computing Machinery.
\newblock URL: \url{https://dl.acm.org/doi/10.1145/2295136.2295173}, \href
  {https://doi.org/10.1145/2295136.2295173}
  {\path{doi:10.1145/2295136.2295173}}.

\bibitem[YASV24]{Yang_2024_GraphBasedFrameworkABACPolicyEnforcementAnalysis}
Mian Yang, Vijayalakshmi Atluri, Shamik Sural, and Jaideep Vaidya.
\newblock A {Graph}-{Based} {Framework} for {ABAC} {Policy} {Enforcement}
  and {Analysis}.
\newblock In Anna~Lisa Ferrara and Ram Krishnan, editors, {\em Data and
  {Applications} {Security} and {Privacy} {XXXVIII}}, pages 3--23, Cham, 2024.
  Springer Nature Switzerland.
\newblock \href {https://doi.org/10.1007/978-3-031-65172-4_1}
  {\path{doi:10.1007/978-3-031-65172-4_1}}.

\bibitem[YCH{\etalchar{+}}24]{Yin_2024_heterogeneousgraphbasedsemisupervisedlearningframeworkaccesscontroldecisionmaking}
Jiao Yin, Guihong Chen, Wei Hong, Jinli Cao, Hua Wang, and Yuan Miao.
\newblock A heterogeneous graph-based semi-supervised learning framework for
  access control decision-making.
\newblock {\em World Wide Web}, 27(4):35, May 2024.
\newblock \href {https://doi.org/10.1007/s11280-024-01275-2}
  {\path{doi:10.1007/s11280-024-01275-2}}.

\bibitem[YYW{\etalchar{+}}23]{You_2023_knowledgegraphempoweredonlinelearningframeworkaccesscontroldecisionmaking}
Mingshan You, Jiao Yin, Hua Wang, Jinli Cao, Kate Wang, Yuan Miao, and Elisa
  Bertino.
\newblock A knowledge graph empowered online learning framework for access
  control decision-making.
\newblock {\em World Wide Web}, 26(2):827--848, March 2023.
\newblock \href {https://doi.org/10.1007/s11280-022-01076-5}
  {\path{doi:10.1007/s11280-022-01076-5}}.

\bibitem[ZTMB22]{ZetterlundTMB22}
Louise Zetterlund, Deepika Tiwari, Martin Monperrus, and Benoit Baudry.
\newblock Harvesting production graphql queries to detect schema faults.
\newblock In {\em 15th {IEEE} Conference on Software Testing, Verification and
  Validation, {ICST} 2022, Valencia, Spain, April 4-14, 2022}, pages 365--376.
  {IEEE}, 2022.
\newblock \href {https://doi.org/10.1109/ICST53961.2022.00014}
  {\path{doi:10.1109/ICST53961.2022.00014}}.

\bibitem[ZZX{\etalchar{+}}23]{Zhang23}
Neng Zhang, Ying Zou, Xin Xia, Qiao Huang, David Lo, and Shanping Li.
\newblock Web apis: Features, issues, and expectations – a large-scale
  empirical study of web apis from two publicly accessible registries using
  stack overflow and a user survey.
\newblock {\em IEEE Transactions on Software Engineering}, 49(2):498--528,
  2023.
\newblock \href {https://doi.org/10.1109/TSE.2022.3154769}
  {\path{doi:10.1109/TSE.2022.3154769}}.

\end{thebibliography}
\end{document}